\documentclass[conference]{IEEEtran}
\IEEEoverridecommandlockouts
\sloppy
\pagestyle{plain}

\usepackage{epsfig,psfrag}
\usepackage[cmex10]{amsmath}
\usepackage{amsthm,graphics,amscd,amssymb,enumerate,xspace,latexsym,%amsbsy
%,MnSymbol,
stmaryrd,wrapfig,subfig}
\usepackage{bbding}
\usepackage{calc}
\usepackage{pgf}
\usepackage{tikz}
\usetikzlibrary{arrows,automata,backgrounds,decorations}
\usetikzlibrary{shapes.multipart}
\usepgflibrary{decorations.pathreplacing}
\usetikzlibrary{arrows,calc,topaths}
\usetikzlibrary{automata,shapes.arrows, backgrounds, fadings,intersections, positioning}
\usetikzlibrary{shadows}
\tikzstyle{small}=[font=\footnotesize]
\tikzset{
    every picture/.style={>=stealth,auto,node distance=2cm},
}

% \usetikzlibrary{automata}
\usetikzlibrary{petri}
\usetikzlibrary{shadows}
% \usetikzlibrary{positioning}
\usetikzlibrary{decorations.pathmorphing}
\usetikzlibrary{decorations.shapes}
\usetikzlibrary{shapes}
\usetikzlibrary{backgrounds}
\usetikzlibrary{fit}
\usetikzlibrary{arrows}
\usetikzlibrary{calc}
\usetikzlibrary{mindmap}
\usetikzlibrary{matrix}

\usepackage[capitalise,noabbrev,nameinlink]{cleveref}

% Sets
\newcommand{\N}{\mathbb{N}}

% \newcommand\restrict[1]{{|}_{{#1}}}

% Cardinal, length, norms, ...
\newcommand{\abs}[1]{\lvert#1\rvert}

% Definitions

%\newcommand{\eqdef}{=}

%vectors

% steps, effects etc

%\newcommand\step[2][]{
%    \xrightarrow[{\raisebox{1.25ex-\heightof{$\scriptstyle#1$}}[0pt]{$\scriptstyle#1$}}]{#2}%
%}

% complexity classes

% Todo notes 

% contains definition environments etc.
\newcommand{\problemx}[3]{
\par\noindent\underline{\sc#1}\par\nobreak\vskip.2\baselineskip
\begingroup\clubpenalty10000\widowpenalty10000
\setbox0\hbox{\bf INPUT:\ }\setbox1\hbox{\bf QUESTION:\ }
\dimen0=\wd0\ifnum\wd1>\dimen0\dimen0=\wd1\fi
\vskip-\parskip\noindent
\hbox to\dimen0{\box0\hfil}\hangindent\dimen0\hangafter1\ignorespaces#2\par
\vskip-\parskip\noindent
\hbox to\dimen0{\box1\hfil}\hangindent\dimen0\hangafter1\ignorespaces#3\par
\endgroup}

\newcommand{\dist}{\mathcal{D}}
\newcommand{\supp}{{\sf supp}}

\newcommand{\buchi}[1]{\mathtt{B\ddot uchi}(#1)}
\newcommand{\reach}[1]{\mathtt{Reach}(#1)}
\newcommand{\reachn}[2]{\mathtt{Reach}_{#1}(#2)}
\newcommand{\reachp}[1]{\mathtt{Reach}^+(#1)}
% MC
\newcommand{\safety}[1]{\mathtt{Safety}(#1)}

% theorems
\newtheorem{theorem}{Theorem}
\newtheorem{lemma}[theorem]{Lemma}

\newtheorem{remark}{Remark}

\newtheorem{example}{Example}

%%%%%%%%%%%%%%%%%%%%%%%%%%%%%%%%%%%%%%%%%%%%%%%%%%%%%%%%%%
%% Quoted Theorems/Lemmata/etc.

%

%

%

% mahsa

\newcommand{\hide}[1]{}

\newcommand{\lrc}[1]{(#1)}

\newcommand{\ignore}[1]{}

\newcommand{\emptyword}{\varepsilon}

\newcommand{\nat}{\mathbb N}
\newcommand{\ord}{\mathbb O}

\newcommand{\tuple}[1]{\lrc{#1}}

\newcommand{\game}{{\mathcal G}}

\newcommand{\gametuple}{\tuple{\states,(\zstates,\ostates,\rstates),\transition,\probp}}

\newcommand{\undef}{\bot}

\newcommand{\states}{S}

\newcommand{\state}{s}

\newcommand{\zstates}{\states_\zsymbol}
\newcommand{\ostates}{\states_\osymbol}
\newcommand{\rstates}{\states_\rsymbol}
\newcommand{\xstates}{\states_\xsymbol}
\newcommand{\zsymbol}{\Box}
\newcommand{\osymbol}{\Diamond}
\newcommand{\rsymbol}{\bigcirc}
\newcommand{\xsymbol}{\odot}

\newcommand{\transition}{{\longrightarrow}}

\newcommand{\probp}{P}

\newcommand{\play}{w}

\newcommand{\partialplay}{w}

\newcommand{\zstrat}{\sigma}
\newcommand{\ostrat}{\pi}
\newcommand{\xstrat}{\tau}
% \newcommand{\ystrat}{\strat^{1-x}}
% \newcommand{\xforcestrat}{{\it force}^x}
% \newcommand{\yforcestrat}{{\it force}^{1-x}}
% \newcommand{\xavoidstrat}{{\it avoid}^x}
% \newcommand{\yavoidstrat}{{\it avoid}^{1-x}}

% types of strategies
% \newcommand{\stratset}{F}
% \newcommand{\xallstrats}[2]{\stratset^{{#1}}_{{#2}}}

\newcommand{\zstratset}{\Sigma}
\newcommand{\ostratset}{\Pi}
\newcommand{\px}{\xsymbol}
\newcommand{\pz}{\zsymbol}
\newcommand{\po}{\osymbol}

\newcommand{\memory}{{\sf M}}
\newcommand{\memconf}{{\sf m}}
\newcommand{\memsuc}{\pi_s}
\newcommand{\memup}{\pi_u}
\newcommand{\memstrattuple}{\tuple{\memory,\initmem,\memup,\memsuc}}
\newcommand{\initmem}{\memconf_0}
\newcommand{\memstrat}[1]{{\sf T}^{#1}}
\newcommand{\memstratn}{\memstrat{}}

\newcommand{\probm}{{\mathcal P}}

\newcommand{\formula}{\mathcal{E}}%{{\varphi}}

\newcommand{\zwinset}[2]{\big [#1 \big ]_{\pz}^{{#2}}}
\newcommand{\owinset}[2]{\big [#1 \big ]_{\po}^{{#2}}}

\newcommand{\reachset}{{\mathcal{T}\,\,\!\!}}

\newcommand{\valueof}[2]{{\mathtt{val}_{#1}(#2)}}

\newcommand{\constraint}{\rhd}
\newcommand{\nconstraint}{{\not\!\rhd}}
\newcommand{\const}{c}
\newcommand{\quantobj}[2]{({#1},{#2})}

\newcommand{\rvi}{{\it RVI}}

\begin{document}

\title{On Strong Determinacy of\\ Countable Stochastic Games}

\author{
\IEEEauthorblockN{Stefan Kiefer\IEEEauthorrefmark{1},
Richard Mayr\IEEEauthorrefmark{2},
Mahsa Shirmohammadi\IEEEauthorrefmark{1},
Dominik Wojtczak\IEEEauthorrefmark{3}}
\IEEEauthorblockA{\IEEEauthorrefmark{1}University of Oxford, UK}
\IEEEauthorblockA{\IEEEauthorrefmark{2}University of Edinburgh, UK}
\IEEEauthorblockA{\IEEEauthorrefmark{3}University of Liverpool, UK}
}

\IEEEoverridecommandlockouts
\IEEEpubid{\makebox[\columnwidth]{
Extended version of material presented at LICS 2017. arXiv.org - CC BY 4.0.
\hfill
} \hspace{\columnsep}\makebox[\columnwidth]{ }
}

\maketitle

\begin{abstract}
We study 2-player turn-based perfect-information 
stochastic games with countably infinite state space. 
The players aim at maximizing/minimizing the probability of a given event
(i.e., measurable set of infinite plays), 
such as reachability, B\"uchi, $\omega$-regular or more general objectives.

These games 
are known to be weakly determined, i.e., they have value.
However, strong determinacy of threshold objectives (given by an event~$\formula$ and a threshold $\const \in [0,1]$) was open in many cases:
is it always the case that the maximizer or the minimizer has a winning strategy, i.e., one that enforces, against all strategies of the other player, that 
$\formula$ is satisfied with
probability $\ge \const$ (resp.\ $< \const$)?
%In the case of $c=1$, one speaks of an almost-sure $\formula$ objective.

We show that almost-sure objectives  (where $c=1$)  are strongly determined.
This vastly generalizes a previous result on finite games with almost-sure tail objectives.
On the other hand we show that $\ge 1/2$ (co-)B\"uchi objectives 
are not strongly determined, not even if the game is finitely branching.

Moreover, for almost-sure reachability and almost-sure B\"uchi objectives in finitely branching games, we strengthen strong determinacy by showing that one of the players must have a memoryless deterministic (MD) winning strategy.
\end{abstract}

\begin{IEEEkeywords}
stochastic games, strong determinacy, infinite state space
\end{IEEEkeywords}

\section{Introduction}\label{sec:introduction}
{\bf\noindent Stochastic games.}
Two-player stochastic games \cite{Filar_Vrieze:book} are adversarial games between two 
players (the maximizer $\pz$ and the minimizer $\po$)
where some decisions are determined randomly according
to a pre-defined distribution.
Stochastic games are also called $2\frac{1}{2}$-player games 
in the terminology of \cite{Chatterjee:2004:QSP:982792.982808,chatterjee03simple}.
Player~$\pz$ tries to maximize the expected value of some payoff function
defined on the set of plays, while player~$\po$ tries to minimize it.
In concurrent stochastic games, in every round both players each
choose an action (out of given action sets) and for each combination of
actions the result is given by a pre-defined distribution.
In the subclass of turn-based stochastic games (also called simple stochastic
games) only one player gets to choose an action in every round, depending
on which player owns the current state.

\newcommand{\detmd}{\checkmark (MD)}
\newcommand{\newdetmd}{\CheckmarkBold \mbox{\bf(MD)}}
\newcommand{\newmddetmd}{\checkmark \mbox{\bf(MD)}}
\newcommand{\dnf}{\mbox{\checkmark ($\neg$FR)}}
\newcommand{\newdnf}{\CheckmarkBold ($\mathbf\neg${\bf FR})}
\newcommand{\newdetdnf}{\CheckmarkBold ($\neg${FR})}
\newcommand{\nd}{$\times$}
\newcommand{\newnd}{\XSolidBold}

\begin{table*}[ht]
\centering
\subfloat[Finitely branching games\label{taba}]{
\begin{tabular}{|l|c|c|c|c|}
\hline
\mbox{Objective}     & $>0$  & $>\const$ & $\ge\const$   & $=1$  \\ \hline
\mbox{Reachability}  & \detmd & \detmd  & \dnf          & \newmddetmd    \\  \hline
\mbox{B\"uchi}       & \newdetdnf  & \newnd  & \newnd   & \newdetmd    \\ \hline
\mbox{Borel}        &  \newdetdnf  & \newnd   & \newnd   & \newdetdnf  \\ \hline
\end{tabular}
}
\quad\quad
\subfloat[Infinitely branching games\label{tabb}]{
\begin{tabular}{|l|c|c|c|c|}
\hline
\mbox{Objective}     & $>0$  & $>\const$ & \hspace{3mm}$\ge\const$\hspace{3mm}   & $=1$  \\ \hline
\mbox{Reachability}  & \detmd    & \nd      & \nd           & \newdetdnf    \\  \hline
\mbox{B\"uchi}       & \newdetdnf  & \nd      & \nd           & \newdetdnf    \\ \hline
\mbox{Borel}        & \newdetdnf  & \nd      & \nd           & \newdetdnf  \\ \hline
\end{tabular}
}
\caption{Summary of determinacy and memory requirement properties for
reachability, B\"uchi and Borel objectives and various probability
thresholds. The results for safety
and co-B\"uchi are implicit, e.g., $>0$ B\"uchi is dual to to {$=1$}
co-B\"uchi. Similarly, $\quantobj{\mbox{Objective}}{>\const}$
is dual to $\quantobj{\neg\mbox{Objective}}{\ge\const}$.
The results hold for every constant $\const \in (0,1)$.
Tables~\ref{taba} and \ref{tabb} show the results for finitely branching and
infinitely branching countable games, respectively.
``\detmd'' stands for ``strongly MD-determined'', ``\dnf'' stands for 
``strongly determined but not strongly FR-determined'' and \nd\ stands
for ``not strongly determined''. 
New results are in boldface.
(All these objectives are weakly determined
by \cite{Maitra-Sudderth:1998}.)}
\label{tab:overview}
\end{table*}

We study 2-player turn-based perfect-information 
stochastic games with \emph{countably infinite} state spaces.
We consider objectives defined via predicates on plays,
not general payoff functions.
Thus the expected payoff value corresponds to the
probability that a play satisfies the predicate.

Standard questions are whether a game is determined,
and whether the strategies of the players can
without restriction be chosen to be of a particular 
type, e.g., MD (memoryless deterministic) or
FR (finite-memory randomized).

{\bf\noindent Finite-state games vs. Infinite-state games.}
Stochastic games with finite state spaces have been extensively studied
\cite{shapley-1953-stochastic,condon-1992-ic-complexity,AHK:FOCS98,GimbertHornSoda10,Chatterjee:2004:QSP:982792.982808},
both w.r.t.\ their determinacy and the strategy complexity (memory
requirements and randomization).
E.g., strategies in \emph{finite} stochastic parity games
can be chosen memoryless deterministic (MD) 
\cite{alfaro-2000-lics-concurrent,chatterjee03simple,chatterjee-2006-qest-strategy}.
These results have a strong influence
on algorithms for deciding the winner of stochastic games,
because such algorithms 
often use a structural property that the strategies can
be chosen of a particular type (e.g., MD or finite-memory).

More recently, several classes of finitely presented infinite-state games have been
considered as well. These are often induced by various types of automata
that use infinite memory (e.g., unbounded pushdown stacks, unbounded counters,
or unbounded fifo-queues). Most of these classes are still finitely branching.
Stochastic games on infinite-state probabilistic recursive systems (i.e.,
probabilistic pushdown automata with unbounded stacks) were studied in
\cite{Etessami:Yannakakis:ICALP05,EY:LMCS2008,EWY:ICALP08},
and stochastic games on systems with unbounded fifo-queues were
studied in \cite{ACMS:LMCS2014}.
However, most these works used techniques that are specially adapted to the
underlying automata model, not a general analysis of infinite-state games.  
Some results on general stochastic games with countably infinite state spaces
were presented in
\cite{Kucherabook11,BBKO:IC2011,Krcal:Thesis:2009,Brozek:TCS2013}
though many questions remained open (see our contributions further below).

It should be noted that many standard results and proof techniques
from finite games do \emph{not}
carry over to countably infinite games.
E.g., 
\begin{itemize}
\item
Even if a state has value, an optimal strategy need not exist, 
not even for reachability
objectives \cite{Kucherabook11}.
\item
Some strong determinacy properties (see below) do not hold,
not even for reachability objectives \cite{BBKO:IC2011,Krcal:Thesis:2009}
(while in finite games they hold even for parity objectives
\cite{Chatterjee:2004:QSP:982792.982808}).
\item
The memory requirements of optimal strategies
are different. In finite games, optimal strategies for
parity objectives can be chosen memoryless deterministic
\cite{Chatterjee:2004:QSP:982792.982808}.
In contrast, in countably infinite games (even if finitely branching)
optimal strategies for reachability objectives, where they exist, require
infinite memory \cite{Kucherabook11}.
\end{itemize}
One of the reasons underlying this difference is the following.
Consider the values of the states in a game w.r.t.\ a certain objective.
If the game is finite then there are only finitely many such values,
and in particular there exists some minimal nonzero value (unless all states
have value zero).
This property does \emph{not carry over} to infinite games. 
Here the set of states is infinite and the infimum over the nonzero values can be zero.
As a consequence, even for a reachability objective, it is possible that all states have 
value $>0$, but still the value of some states is $<1$. Such phenomena appear
already in infinite-state Markov chains like the classic Gambler's ruin problem with unfair
coin tosses in the player's favor (e.g., $0.6$ win and $0.4$ lose). The value,
i.e., the probability of ruin, is always $>0$, but still $<1$ in every state
except the ruin state itself; cf.~\cite{Feller:book} (Chapt.~14).

\smallskip
{\bf\noindent Weak determinacy.}
Using Martin's result \cite{Martin:1998},
Maitra \& Sudderth \cite{Maitra-Sudderth:1998} showed that stochastic games with Borel
payoffs are \emph{weakly determined}, i.e., all states have value.
This very general result holds even for concurrent games and general (not necessarily countable) 
state spaces. They work in the framework of finitely additive probability
theory (under weak assumptions on measures) and only assume a finitely additive law of motion.
Also their payoff functions are general bounded Borel measurable functions, 
not necessarily predicates on plays.

\smallskip
{\bf\noindent Strong determinacy.}
Given a predicate $\formula$ on plays and a constant $\const \in [0,1]$, 
strong determinacy 
of a threshold objective $\quantobj{\formula}{\constraint\const}$
(where $\constraint\in\{>,\ge\}$)
holds iff either the maximizer or the minimizer
has a winning strategy, i.e., a strategy that enforces (against any 
strategy of the other player) that 
the predicate $\formula$ holds with
probability $\constraint \const$ (resp.\ $\mathrel{\not\constraint} \const$).
In the case of $\quantobj{\formula}{=1}$, 
one speaks of an almost-sure $\formula$ objective.
If the winning strategy of the winning player 
can be chosen MD (memoryless deterministic)
then one says that the threshold objective is strongly 
MD determined. Similarly for other types of strategies, e.g.,
FR (finite-memory randomized).

\smallskip
{\bf\noindent Strong determinacy in finite games.}
Strong determinacy for almost-sure objectives $\quantobj{\formula}{=1}$ 
(and for the dual positive probability objectives $\quantobj{\formula}{>0}$) 
is sometimes called \emph{qualitative determinacy}~\cite{GimbertHornSoda10}.
In~\cite[Theorem~3.3]{GimbertHornSoda10} it is shown that \emph{finite} 
stochastic games with Borel \emph{tail} (i.e., prefix-independent)
objectives are qualitatively determined.
(We'll show a more general result for countably infinite games and general 
objectives; see below.)
In the special case of parity objectives,
even strong MD determinacy holds for any threshold
$\constraint\const$ \cite{Chatterjee:2004:QSP:982792.982808}.

\smallskip
{\bf\noindent Strong determinacy in infinite games.}
It was shown in~\cite{BBKO:IC2011,Krcal:Thesis:2009,Brozek:TCS2013}
that in finitely branching games with countable state spaces 
reachability objectives with any threshold
$\constraint\const$ with $\const \in [0,1]$, are strongly determined.
However, the player~$\pz$ strategy may need infinite
memory \cite{Kucherabook11},
and thus reachability objectives are not strongly MD determined. 
Strong determinacy 
does \emph{not} hold for infinitely branching reachability games 
with thresholds $\constraint\const$ with $\const \in (0,1)$; cf.\ Figure 1 in \cite{BBKO:IC2011}.

\smallskip
{\bf\noindent Our contribution to determinacy.}
We show that almost-sure Borel objectives are strongly determined
for games with \emph{countably infinite} state spaces.
(In particular this even holds for infinitely branching games; cf.~Table~\ref{tab:overview}.)
This removes both the restriction to finite games and the
restriction to tail objectives of \cite[Theorem~3.3]{GimbertHornSoda10},
and solves an open problem stated there.
(To the best of our knowledge, strong determinacy was open even
for almost-sure reachability objectives in infinitely branching
countable games.)

On the other hand, we show that, for countable games,
$\constraint\const$ (co-)B\"uchi objectives 
are not strongly determined for any $\const \in (0,1)$, 
not even if the game graph is finitely branching.

\smallskip
{\bf\noindent Our contribution to strategy complexity.}
While $\constraint\const$ reachability objectives in 
finitely branching countable games are not strongly MD determined 
in general \cite{Kucherabook11},
we show that strong MD determinacy holds for many interesting subclasses.
In finitely branching games, 
it holds for strict inequality $>\const$ reachability, almost-sure
reachability, and in all games where either player~$\pz$ does not have any
value-decreasing transitions or player~$\po$ does not have any
value-increasing transitions.

Moreover, we show that almost-sure B\"uchi objectives
(but not almost-sure co-B\"uchi objectives) are strongly MD determined,
provided that the game is finitely branching.

Table~\ref{tab:overview} summarizes all properties of strong determinacy and memory
requirements for Borel objectives and subclasses on countably infinite games.

\section{Preliminaries}\label{sec:preliminaries}
A \emph{probability distribution} over a countable (not necessarily finite) set~$S$ is a function
$f:\states\to[0,1]$ s.t.~$\sum_{\state\in \states}f(\state)=1$.
We use~$\supp(f) = \{\state \in \states \mid f(\state) > 0\}$ to denote the \emph{support} of~$f$.
Let $\dist(\states)$ be the set of all probability distributions over~$\states$.

We consider $2\frac{1}{2}$-player games where players have perfect information and play in turn for infinitely many rounds.
\emph{Games}~$\game=\gametuple$ are defined such that the countable set of \emph{states}  is partitioned into
the  set~$\zstates$ of states  of player$~\pz$, the set~$\ostates$ of states of player~$\po$ and
\emph{random states} $\rstates$.
The relation $\mathord{\transition}\subseteq\states\times\states$ is the transition relation.
We write $\state\transition{}\state'$ if $\tuple{\state,\state'}\in \transition$, and we assume that each state~$\state$ has a \emph{successor} state $\state'$ with $\state \transition \state'$.
The probability function $\probp:\rstates \to \dist(\states)$ assigns to each random 
state~$\state \in \rstates$ a probability distribution over its 
successor states. %~$\{\state' \in \states\mid \state\transition{}\state'\}$.
%
%A set $\reachset \subseteq \states$ is a \emph{sink} in~$\game$ if for all $\state \in \reachset$ all successors of~$\state$ are in~$\reachset$.
%
The game~$\game$ is called \emph{finitely branching} if each state has only finitely many successors;
otherwise, it is \emph{infinitely branching}.
Let $\xsymbol \in \{\zsymbol,\osymbol\}$.
If $\xstates=\emptyset$, we say that player~$\px$ is \emph{passive}, and the game is a \emph{Markov decision process (MDP)}. % is a game where~$\xstates=\emptyset$, i.e., all states are either  
%for player$~\{\zsymbol,\osymbol\}\setminus \{\xsymbol\}$  or random states; in this special case, 
%
A Markov chain is an MDP where both players are passive. %~$\zstates=\ostates=\emptyset$, i.e., all states are random states.

The stochastic game is played by two players $\pz$ (maximizer) and $\po$ (minimizer).
The game starts  in a given  initial state~$\state_0$ and evolves for infinitely many rounds.
In each round,  if the game is in state $\state \in \xstates$ 
then player $\xsymbol$ chooses a successor state~$s'$ with~$s\transition{}s'$;
otherwise the game is in a random state~$s\in \rstates$ and proceeds randomly to~$s'$ with probability~$\probp(s)(s')$.

\medskip

\noindent {\bf Strategies.}
A \emph{play}~$\play$  is an infinite sequence
$\state_0\state_1\cdots \in \states^\omega$ of states 
such that  $\state_i\transition{}\state_{i+1}$ for all $i\geq 0$;
let  $\play(i)=\state_i$ denote the $i$-th state along~$\play$.
A \emph{partial play} is a finite prefix of a play.
We say that (partial) play $\play$ \emph{visits} $\state$ if
$\state=\play(i)$ for some $i$, and that~$\play$ starts in~$s$ if $\state=\play(0)$. 
A \emph{strategy} of the player~$\pz$ is a 
function $\zstrat:\states^*\zstates \to \dist(S)$ that assigns to partial plays 
$\partialplay\state \in \states^*\zstates$ 
a distribution over the successors~$\{\state'\in \states\mid \state \transition{} \state'\}$. 
Strategies~$\ostrat:\states^*\ostates \to \dist(S)$ for the player~$\po$ are defined analogously.
The set of all strategies of player~$\pz$ and player~$\po$  in 
$\game$ is denoted by $\zstratset_\game$ and $\ostratset_\game$,
respectively  (we omit the subscript and write~$\zstratset$ and $\ostratset$ if $\game$ is clear).
A (partial) play~$\state_0\state_1\cdots$ is induced by strategies~$\tuple{\zstrat,\ostrat}$
if~$\state_{i+1}\in \supp(\zstrat(\state_0\state_1\cdots\state_i))$ for all~$\state_i \in \zstates$, 
and if~$\state_{i+1} \in\supp(\ostrat(\state_0\state_1\cdots\state_i))$ for all~$\state_i \in \ostates$.
%, moreover, $\state_{i+1}\in \supp(\probp(\state_i))$ for all~$\state_i \in \ostates$.
%
%We use~$\playsof{\game,\state,\zstrat,\ostrat}$ to denote the set of all plays in $\game$ starting in state $\state$ induced by $\tuple{\zstrat,\ostrat}$.

To emphasize the amount of memory required to implement a strategy,
we present an equivalent formulation of strategies.
A strategy of player~$\xsymbol$ can be implemented by a probabilistic transducer~$\memstratn=\memstrattuple$
where $\memory$ is a  countable set (the memory of the strategy),
$\memconf_0\in\memory$ is the initial memory mode
and  $\states$ is the input and output alphabet.
The probabilistic transition function~$\memup : \memory \times \states\to \dist(\memory)$ updates the 
memory mode of the transducer. 
The probabilistic successor function~$\memsuc : \memory \times \xstates \to \dist(\states)$ 
outputs the next successor, where $\state' \in \supp(\memsuc(\memconf,\state))$ implies 
$\state\transition{}\state'$.
We extend $\memup$ to $\dist(\memory) \times \states\to \dist(\memory)$ and  
$\memsuc$ to $\dist(\memory) \times \xstates\to \dist(\states)$, in the natural way.
Moreover, we extend $\memup$
to paths  by
$\memup(\memconf,\emptyword)=\memconf$ and
$\memup(\memconf,\state_0\cdots\state_n) =
\memup(\memup(\state_0\cdots\state_{n-1},\memconf),\state_n)$.
The strategy $\xstrat_{\memstratn}:\states^*\xstates\to\dist(\states)$
induced by the transducer~$\memstratn$ is given by 
$\xstrat_{\memstratn}(\state_0\cdots\state_{n}):=\memsuc(\state_{n},\memup(\state_0\cdots\state_{n-1},\initmem))$.

Strategies are in general \emph{history dependent}~(H) and \emph{randomized}~(R).
An H-strategy~$\xstrat \in\ \{\zstrat,\ostrat\}$ is \emph{finite memory}~(F) if there exists some transducer $\memstratn$ with memory~$\memory$
such that  $\xstrat_{\memstratn}=\xstrat$ and $\abs{\memory}<\infty$; otherwise 
$\xstrat$  \emph{requires infinite memory}.
An F-strategy is   \emph{memoryless}~(M) (also called \emph{positional}) 
if~$\abs{\memory}=1$. 
For convenience, we may view 
%deterministic strategies as functions $\zstrat: \states^{*} \zstates\to \states$, and 
M-strategies as functions $\xstrat: \xstates \to \dist(\states)$.
An R-strategy $\xstrat$ is \emph{deterministic}~(D)
if $\memup$ and $\memsuc$ map to Dirac distributions;
it implies that $\xstrat(\partialplay)$ is a Dirac distribution for all partial plays~$\partialplay$.
All combinations of the properties in $\{\text{M},\text{F},\text{H}\}\times \{\text{D},\text{R}\}$ are possible, e.g., MD stands for
memoryless deterministic. HR strategies are the most general type.

\medskip

\noindent {\bf Probability Measure and Events.} 
To a game~$\game$, an initial state~$\state_0$ and strategies~$\tuple{\zstrat,\ostrat}$
we associate the standard probability
space~$(\state_0\states^\omega,\mathcal{F},\probm_{\game,\state_0,\zstrat,\ostrat})$
w.r.t.\ the induced Markov chain.
First one defines a topological space on the set of infinite
plays~$\state_0 \states^{\omega}$. The 
\emph{cylinder sets} are the sets $s_0 s_1 \ldots s_n \states^\omega$, where
$s_1, \ldots, s_n \in \states$
and the open sets are arbitrary unions of cylinder sets, i.e., the sets
$Y \states^{\omega}$ with $Y\subseteq s_0 \states^{*}$.
The Borel $\sigma$-algebra 
${\mathcal F} \subseteq 2^{\state_0 \states^\omega}$ is the smallest
$\sigma$-algebra that contains all the open sets.

The probability measure $\probm_{\game,\state_0,\zstrat,\ostrat}$ is obtained
by first defining it on the cylinder sets and then extending it to all sets in 
the Borel $\sigma$-algebra.
If $s_0 s_1 \ldots s_n$ is not a partial play induced by~$\tuple{\zstrat,\ostrat}$ then let 
$\probm_{\game,\state_0,\zstrat,\ostrat}(s_0 s_1 \ldots s_n \states^\omega) = 0$; 
otherwise let $\probm_{\game,\state_0,\zstrat,\ostrat}(s_0 s_1 \ldots s_n \states^\omega) = \prod_{i=0}^{n-1} \xstrat(s_0 s_1 \ldots s_i)(s_{i+1})$, where $\xstrat$ is such that
$\xstrat(ws)=\zstrat(ws)$ for all $w s \in \states^* \zstates$,
$\xstrat(ws)=\ostrat(ws)$ for all $w s \in \states^* \ostates$, and 
$\xstrat(w s) = \probp(s)$ for all $w s \in \states^* \rstates$.
By Carath\'eodory's extension theorem~\cite{billingsley-1995-probability}, 
this defines a unique probability
measure~$\probm_{\game,\state_0,\zstrat,\ostrat}$ on the Borel
$\sigma$-algebra $\mathcal{F}$.

We will call any set $\formula \in \mathcal{F}$ an \emph{event}, i.e., 
an event is a measurable (in the probability space above) set of infinite plays.
Equivalently, one may view an event~$\formula$ as a Borel measurable payoff
function of the form $\formula: \state_0 \states^\omega \to \{0,1\}$.
Given $\formula' \subseteq \states^\omega$ (where potentially $\formula' \not\subseteq \state_0 \states^\omega$) we often write $\probm_{\game,\state_0,\zstrat,\ostrat}(\formula')$ for $\probm_{\game,\state_0,\zstrat,\ostrat}(\formula' \cap \state_0 \states^\omega)$ to avoid clutter.
 
\medskip
\noindent {\bf Objectives.} Let $\game = \gametuple$ be a game.
The objectives of the players are determined by  events~$\formula$.
We write~$\neg \formula$ for the dual  objective defined as
$\neg \formula=S^{\omega} \setminus \formula$.
%We use 
%$\formula^{\state} \subseteq \state \states^\omega$ to denote events 
%for which each infinite play~$s_0 s_1 s_2 \cdots$ starts in~$s$, i.e., 
%$s_0=\state$. 

Given a target set $\reachset \subseteq \states$, 
the \emph{reachability objective} is defined by 
the event $$\reach{\reachset}=\{\state_0\state_1\cdots \in S^{\omega}\mid \exists i.\,
\state_i \in \reachset\}.$$
Moreover, $\reachn{n}{\reachset}$ denotes the set of all 
plays visiting~$\reachset$ in the first $n$ steps, i.e.,
$\reachn{n}{\reachset} =\{\state_0\state_1\cdots\mid \exists
i\le n.\, \state_i \in \reachset\}$.
The \emph{safety objective} is defined as the dual of reachability:
$\safety{\reachset}=\neg \reach{\reachset}$.
%Given a reachability or a safety objective, we can assume without loss of generality that $\reachset$ is a sink in~$\game$.

For a set $\reachset \subseteq \states$ of states called \emph{B\"uchi states}, 
the \emph{B\"uchi objective} is the event 
$$\buchi{\reachset}=\{\state_0\state_1\cdots\in S^{\omega}\mid \forall i \,\exists j\geq i.\,
\state_j\in \reachset\}.$$
The \emph{co-B\"uchi objective} is defined as the dual of B\"uchi.
%For our reachability games where~$\reachset$ is a sink, B\"uchi subsumes the reachability.

%For convention, we sometimes use formula of the temporal
%logic LTL to specify objectives; 
%We assume familiarity with the syntax and semantics of~LTL \cite{CGP:book}.
%We use~$\eventually \,\reachset$ for the reachability
%and $\always \eventually \, \reachset$ for the B\"uchi, where 
%these formulas are interpreted on the structure $(\states,\transition)$:
%\[\denotationof{\eventually \,\reachset}{}=\reach \reachset  \qquad \text{ and } \qquad \denotationof{\always \eventually \,\reachset}{}=\buchi \reachset.\]
%

Note that the objectives of player~$\pz$ (maximizer) and player~$\po$
(minimizer) are dual to each other. Where player~$\pz$ tries
to maximize the probability of some objective $\formula$,
player~$\po$ tries to maximize the probability of $\neg\formula$.

\section{Determinacy}\label{sec:determinacy}
\subsection{Optimal and $\epsilon$-Optimal Strategies; Weak and Strong Determinacy}

Given an objective~$\formula$ for player~$\zsymbol$ in a game $\game$,
 state $\state$ has \emph{value} if  
\[\sup_{\zstrat\in\zstratset}\inf_{\ostrat\in\ostratset}\probm_{\game,\state,\zstrat,\ostrat}(\formula)
= \inf_{\ostrat\in\ostratset}\sup_{\zstrat\in\zstratset}\probm_{\game,\state,\zstrat,\ostrat}(\formula).\]

If $\state$ has value then $\valueof{\game}{\state}$ denotes the value of
$\state$ defined by the above equality. 
A game with a fixed objective is called \emph{weakly determined}
iff every state has value.

\begin{theorem}[follows immediately from~\cite{Maitra-Sudderth:1998}]\label{thm:weak_borel_determinacy}
Countable stochastic games (as defined in Section~\ref{sec:preliminaries}) are weakly determined.
%Finitely additive stochastic games with Borel objectives are weakly determined.
\end{theorem}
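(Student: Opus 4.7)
The plan is to obtain \Cref{thm:weak_borel_determinacy} as a direct specialization of the Maitra--Sudderth theorem, so the work consists entirely of checking that our model satisfies (in fact, is a strict restriction of) the hypotheses used in \cite{Maitra-Sudderth:1998}. Maitra and Sudderth prove weak determinacy for concurrent two-player zero-sum stochastic games on an arbitrary state space, equipped with a finitely additive law of motion and a bounded Borel-measurable payoff function defined on the space of plays (with the product topology generated by the cylinder sets). Thus it suffices to exhibit, for each countable turn-based game $\game$, initial state $\state_0$ and event $\formula$ in the sense of \Cref{sec:preliminaries}, a Maitra--Sudderth instance whose value at $\state_0$ coincides with the left and right hand sides of the determinacy equation in our setup.

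First I would encode the turn-based game as a concurrent one in the standard way: at each state $\state$ only one of the players has a nontrivial action set; the other player's action set is a singleton that has no effect on the kernel. The stochastic kernel is: $\probp(\state)$ at random states, a Dirac on the chosen successor at $\pz$- or $\po$-states. Since $\states$ is countable, each of these distributions is countably (hence finitely) additive on the discrete $\sigma$-algebra of~$\states$, so the finitely additive assumption of \cite{Maitra-Sudderth:1998} is satisfied trivially. Behavioural strategies in the Maitra--Sudderth sense then restrict, via Kuhn's theorem for perfect-information games, to exactly our $\zstratset$ and $\ostratset$; any pair $(\zstrat,\ostrat)$ of such strategies induces the same probability measure $\probm_{\game,\state_0,\zstrat,\ostrat}$ on cylinders (hence, by Carath\'eodory, on the Borel $\sigma$-algebra) as the Maitra--Sudderth product kernel, because both are determined by the same one-step distributions.

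Second I would handle the payoff. The event $\formula \in \mathcal F$ is by definition a Borel subset of $\state_0\states^\omega$ with the topology generated by the cylinders; its indicator $\mathbf 1_\formula : \state_0\states^\omega \to \{0,1\}$ is therefore a bounded Borel-measurable payoff in the Maitra--Sudderth sense, and its expectation under $\probm_{\game,\state_0,\zstrat,\ostrat}$ equals $\probm_{\game,\state_0,\zstrat,\ostrat}(\formula)$. Applying \cite{Maitra-Sudderth:1998} to this payoff yields
\[
\sup_{\zstrat}\inf_{\ostrat}\probm_{\game,\state_0,\zstrat,\ostrat}(\formula)
\;=\;
\inf_{\ostrat}\sup_{\zstrat}\probm_{\game,\state_0,\zstrat,\ostrat}(\formula),
\]
which is exactly weak determinacy at $\state_0$. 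Since $\state_0$ was arbitrary, every state has a value.

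The only real obstacle is bookkeeping at the interface: one has to be sure that restricting the Maitra--Sudderth behavioural strategies to perfect-information turn-based strategies does not shrink the value (this is standard but needs a line), and that the Borel $\sigma$-algebras on the play space agree on the two sides (they do, because the cylinder base is the same). Everything else is a direct quotation of the Maitra--Sudderth theorem.
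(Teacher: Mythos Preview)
Your proposal is correct and matches the paper's approach exactly: the paper does not give a proof at all beyond stating that the theorem ``is an immediate consequence of a far more general result by Maitra \& Sudderth'' on weak determinacy of (finitely additive) stochastic games with Borel payoffs. Your write-up simply spells out the verification that the turn-based countable model of \Cref{sec:preliminaries} is a special case of their setting, which is precisely what ``follows immediately'' means here.
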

Theorem~\ref{thm:weak_borel_determinacy} is an immediate consequence of a far
more general result by Maitra \& Sudderth~\cite{Maitra-Sudderth:1998} on weak determinacy of (finitely additive) games with general Borel payoff objectives.

For $\epsilon \ge 0$ and $\state \in \states$, we say that
\begin{itemize}
\item
$\zstrat \in \zstratset$ is \emph{$\epsilon$-optimal (maximizing)}  iff
$\probm_{\game,\state,\zstrat,\ostrat}(\formula) \ge \valueof{\game}{\state}
-\epsilon$ for all $\ostrat \in \ostratset$.
\item
$\ostrat \in \ostratset$ is \emph{$\epsilon$-optimal (minimizing)} iff
$\probm_{\game,\state,\zstrat,\ostrat}(\formula) \le \valueof{\game}{\state}
+\epsilon$ for all $\zstrat \in \zstratset$.
\end{itemize}
A $0$-optimal strategy is called \emph{optimal}.
An optimal strategy for the player~$\zsymbol$ is \emph{almost-surely} winning if $\valueof{\game}{\state}=1$.
Unlike in finite-state games, optimal strategies need not exist
in countable games, not even for reachability objectives in finitely
branching MDPs \cite{BBEKW10,BBKO:IC2011}.

However, since our games are weakly determined by Theorem~\ref{thm:weak_borel_determinacy}, 
for all $\epsilon>0$ there exist $\epsilon$-optimal strategies for both players.

For an objective~$\formula$ and $\constraint\in\{\mathord{\geq},\mathord{>}\}$ and threshold $\const \in [0,1]$,
we define \emph{threshold objectives} $\quantobj{\formula}{\constraint\const}$
as follows.
\begin{itemize}
	\item  ${\zwinset{\formula}{\constraint\const}}_{\!\!\game}$ is the set of states~$\state$ for which there exists 
a strategy~$\zstrat$ such that, for all $\ostrat \in \ostratset$,  we have $\probm_{\game,\state,\zstrat,\ostrat}(\formula) \constraint \const$.
\item ${\owinset{\formula}{\nconstraint\const}}_\game$ is the set of states~$\state$ for which there exists 
a strategy~$\ostrat$ such that, for all $\zstrat \in \zstratset$,  we have $\probm_{\game,\state,\zstrat,\ostrat}(\formula) \nconstraint \const$.
\end{itemize}
We omit the subscript $\game$ where it is clear from the context.
We call a state~$\state$ \emph{almost-surely winning} for the player~$\pz$ iff $\state \in \zwinset{\formula}{\ge 1}$.

By the duality of the players, a $\quantobj{\formula}{\ge\const}$
objective for player~$\pz$ corresponds to a
$\quantobj{\neg\formula}{>1-\const}$
objective from player~$\po$'s point of view. 
E.g., an almost-sure B\"uchi objective for player~$\pz$ 
corresponds to a positive-probability 
co-B\"uchi objective for player~$\po$.
Thus we can restrict our attention to reachability, B\"uchi and general (Borel set) objectives,
since safety is dual to reachability, and co-B\"uchi is dual 
to B\"uchi, and Borel is self-dual.

A game $\game$ with 
threshold objective $\quantobj{\formula}{\constraint\const}$
is called \emph{strongly determined} iff
in every state~$\state$ either player~$\pz$ or player~$\po$
has a winning strategy, i.e., iff
$\states = \zwinset{\formula}{\constraint\const} \uplus 
\owinset{\formula}{\nconstraint\const}$.

Strong determinacy depends on the specified threshold $\constraint\const$.
Strong determinacy for almost-sure objectives $\quantobj{\formula}{=1}$ (and for the dual positive probability objectives $\quantobj{\formula}{>0}$) is sometimes called \emph{qualitative determinacy}~\cite{GimbertHornSoda10}.
In~\cite[Theorem~3.3]{GimbertHornSoda10} it is shown that \emph{finite} stochastic games with \emph{tail} objectives are qualitatively determined.
An objective~$\formula$ is called \emph{tail} if for all $w_0 \in \states^*$ and all $w \in \states^\omega$ we have $w_0 w \in \formula \Leftrightarrow w \in \formula$, i.e., a tail objective is independent of finite prefixes.
The authors of~\cite{GimbertHornSoda10} express ``hope that [their qualitative determinacy theorem] may be extended beyond the class of finite simple stochastic tail games''.
We fulfill this hope by generalizing their theorem from finite to countable games and from tail objectives to arbitrary objectives:

\begin{theorem}\label{thm:amost-sure-strong-det}
Stochastic games, even infinitely branching ones, with almost-sure objectives 
are strongly determined.
\end{theorem}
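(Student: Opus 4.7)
The plan is to prove the contrapositive: if $\state_0 \notin W := \zwinset{\formula}{\ge 1}$, then some $\ostrat^* \in \ostratset$ satisfies $\probm_{\game,\state_0,\zstrat,\ostrat^*}(\formula) < 1$ for every $\zstrat \in \zstratset$. By Theorem~\ref{thm:weak_borel_determinacy} every state has a value $v(\state) := \valueof{\game}{\state}$, and I split on $v(\state_0)$. The easy case is $v(\state_0) < 1$: any $\epsilon$-optimal minimizing strategy with $\epsilon = (1 - v(\state_0))/2$ already satisfies $\probm_{\game,\state_0,\zstrat,\ostrat^*}(\formula) \le v(\state_0) + \epsilon < 1$ uniformly in $\zstrat$. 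The delicate case is $v(\state_0) = 1$ while $\state_0 \notin W$: here the value's supremum is not attained, so for every individual $\zstrat$ one has $\inf_\ostrat \probm_{\game,\state_0,\zstrat,\ostrat}(\formula) < 1$ (otherwise $\zstrat$ itself would witness $\state_0 \in W$), but there is no uniform positive lower bound on $\probm(\neg\formula)$ over~$\zstrat$, so no single $\epsilon$-optimal $\ostrat^*$ works and the crux is to combine the different ``bad'' responses $\ostrat_\zstrat$ into one strategy.

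For the delicate case I would build $\ostrat^*$ from a transfinite hierarchy of spoiler sets. Let $L_0 := \{\state : v(\state) < 1\}$, and at successors put into $L_{\alpha+1}$ the $\po$-states from which $\po$ has a one-step move into $L_\alpha$, together with the $\pz$- and random states all of whose successors (respectively support successors) already lie in $L_\alpha$; at limits take unions. Since $\states$ is countable the hierarchy stabilizes at some $L_\infty$. The strategy $\ostrat^*$ is then defined layer by layer: at a $\po$-state in $L_{\alpha+1}\setminus L_\alpha$ play the witnessing move into $L_\alpha$, and once the play enters $L_0$ switch to the easy-case $\epsilon$-optimal strategy from that state. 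An induction on $\alpha$ gives $L_\infty \subseteq \states \setminus W$, because any play against $\ostrat^*$ reaches $L_0$ with positive probability and then fails $\formula$ with positive probability.

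The main obstacle is the converse inclusion $\states \setminus L_\infty \subseteq W$. By construction, every $\po$-move from a state $\state \notin L_\infty$ stays in $\states \setminus L_\infty$, and $\pz$ has some successor (respectively every random-state support successor) also in $\states \setminus L_\infty$, so $\pz$ can force the play to remain in $\states \setminus L_\infty$ forever; the hard part is to upgrade this safety guarantee into an almost-sure $\formula$ strategy. For this I would apply Theorem~\ref{thm:weak_borel_determinacy} inside the restricted subgame on $\states \setminus L_\infty$, observe that every state there still has subgame-value~$1$ by the definition of $L_\infty$, and then stitch together $\epsilon$-optimal $\pz$-strategies along a sequence $\epsilon_n \to 0$ into a single almost-sure winning strategy, invoking Martin-style Borel-determinacy applied to the combined Borel objective ``stay in $\states \setminus L_\infty$ and satisfy~$\formula$.'' This stitching step is the technical crux of the argument, as it must turn a collection of near-optimal strategies (which in a countably infinite game need not converge to an optimal one) into a genuine almost-sure winner; the definability of $\states \setminus L_\infty$ via the transfinite construction and the Borel structure of $\formula$ are both essential here.
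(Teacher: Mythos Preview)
Your transfinite hierarchy $L_\alpha$ is the wrong one, and the claim ``every state in $\states\setminus L_\infty$ still has subgame-value~$1$'' is false. Your construction is a purely graph-theoretic attractor: a $\pz$-state enters $L_{\alpha+1}$ only when \emph{all} of its successors are already in $L_\alpha$. But a $\pz$-state can fail to be almost-surely winning even though it always retains a successor outside $L_\infty$. The paper's own Example~\ref{ex:exampleLimitOrdinal} (Figure~\ref{fig:exampleLimitOrdinal}) already witnesses this: player~$\po$ is passive, the $0$-labelled states are exactly your $L_0$, and every $1$-labelled $\pz$-state has one successor in $L_0$ and one successor that is again a $1$-labelled $\pz$-state. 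Hence none of the $1$-labelled states ever enters your hierarchy, so they lie in $\states\setminus L_\infty$; yet in the subgame on $\states\setminus L_\infty$ the only remaining transitions from those states go up the $1$-ladder forever, so the subgame value there is $0$, not $1$, and indeed these states are not almost-surely winning in~$\game$. Your converse inclusion $\states\setminus L_\infty\subseteq W$ therefore fails. (Incidentally, once you observe that any $\po$-state with a move into $L_0$ already has value $<1$, your successor step adds essentially nothing in the finitely branching case; the hierarchy collapses to $L_0$.)

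What the paper does instead is crucial: at each stage it \emph{recomputes the value} in the current subgame $\game_\alpha$ and removes the states whose value has dropped below~$1$ there. This is why the $1$-labelled states get removed at stage~$1$: once the $0$-labelled states are gone, the $1$-labelled states have value~$0$ in $\game_1$. This value-recomputation is exactly what makes the fixed point $\game_\beta$ a subgame in which all states genuinely have value~$1$. Even granting the correct fixed point, your final ``stitching'' paragraph is still only a sketch of the hardest step: turning ``every state has value~$1$ in the subgame'' into a single almost-surely winning strategy. The paper handles this with a concrete \emph{reset} strategy (switch to a fresh $1/3$-optimal strategy whenever the conditional guarantee drops below $1/3$), a martingale-style argument that resets happen only finitely often almost surely, and L\'evy's zero-one law to conclude $\probm(\formula)=1$. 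Your appeal to ``Martin-style Borel determinacy'' for the combined objective does not supply this; determinacy gives values, not the existence of optimal strategies, and in countable stochastic games a state can have value~$1$ without any strategy attaining it.
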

Theorem~\ref{thm:amost-sure-strong-det} does not carry over to thresholds other than 0~or~1; cf.~Theorem~\ref{thm:Buchi-no-strong-det}.

The main ingredients of the proof of Theorem~\ref{thm:amost-sure-strong-det} are transfinite induction, weak determinacy of stochastic games (Theorem~\ref{thm:weak_borel_determinacy}), the concept of a ``reset'' strategy from~\cite{GimbertHornSoda10}, and L\'evy's zero-one law.
The principal idea of the proof is to construct a transfinite sequence of subgames, by removing parts of the game that player~$\pz$ cannot risk entering.
This approach is used later in this paper as well, for Theorems \ref{thm:reachability} and~\ref{thm:Buchi}.

\begin{example} \label{ex:exampleLimitOrdinal}
We explain this approach using the reachability game in Figure~\ref{fig:exampleLimitOrdinal} as an example.
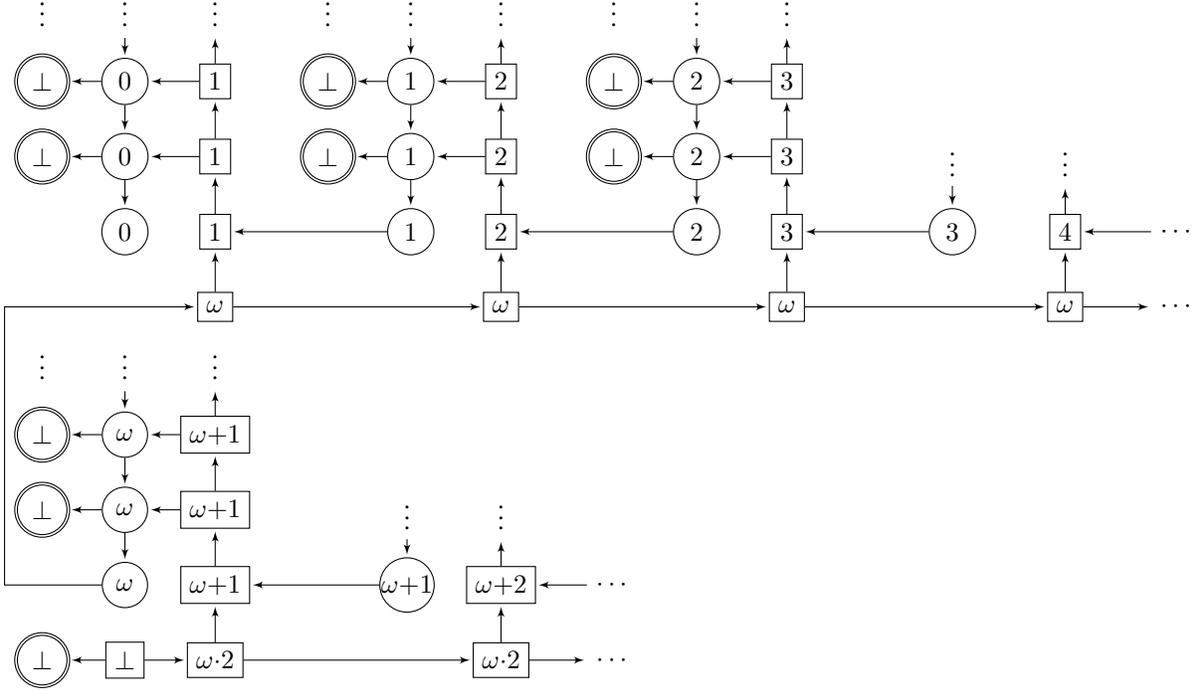
\begin{figure*}[t]
\centering
\begin{tikzpicture}[>=latex',shorten >=1pt,node distance=1.9cm,on grid,auto,
roundnode/.style={circle, draw,minimum size=1.5mm},
squarenode/.style={rectangle, draw,minimum size=2mm},
diamonddnode/.style={diamond,draw, minimum size=8mm, inner sep=-10pt}]

\node [roundnode,double] (r0) at(0,4.7) [draw]{$\bot$};
\node [roundnode,double] (r1) [draw,above=1cm of r0]{$\bot$};
\node [roundnode,double] (r2) [draw=none,above=1cm of r1]{$\vdots$};

\node [roundnode] (s0) [draw,right=1.1cm of r0] {$0$};
\node [roundnode] (s1) [draw,above=1cm of s0] {$0$};
\node [roundnode] (s2) [draw=none,above=1cm of s1]{$\vdots$}; 

\node [roundnode] (s4) [draw,below=1cm of s0]{$0$};

\node [squarenode] (ss0) [draw,right=1.2cm of s0] {$1$};
\node [squarenode] (ss1) [draw,above=1cm of ss0] {$1$};
\node [squarenode] (ss2) [draw=none,above=1cm of ss1]{$\vdots$}; 

\node [squarenode] (ss4) [draw,below=1cm of ss0]{$1$};

\path[->] (s0) edge (r0);
\path[->] (s1) edge (r1);

\path[->] (ss0) edge (s0);
\path[->] (ss1) edge (s1);

\path[->] (s2) edge (s1);
\path[->] (s1) edge (s0);
\path[->] (s0) edge (s4);

\path[->] (ss4) edge (ss0);
\path[->] (ss0) edge (ss1);
\path[->] (ss1) edge (ss2);

\node [squarenode] (t1) [draw,below=1cm of ss4]{$\omega$};
\path[->] (t1) edge (ss4);

%===========================================
\node [roundnode,double] (r0) [draw,right=1.5cm of ss0]{$\bot$};
\node [roundnode,double] (r1) [draw,above=1cm of r0]{$\bot$};
\node [roundnode,double] (r2) [draw=none,above=1cm of r1]{$\vdots$};

\node [roundnode] (s0) [draw,right=1.1cm of r0] {$1$};

\node [roundnode] (s1) [draw,above=1cm of s0] {$1$};
\node [roundnode] (s2) [draw=none,above=1cm of s1]{$\vdots$}; 

\node [roundnode] (s4) [draw,below=1cm of s0]{$1$};
\path[->] (s4) edge (ss4);

\node [squarenode] (ss0) [draw,right=1.2cm of s0] {$2$};
\node [squarenode] (ss1) [draw,above=1cm of ss0] {$2$};
\node [squarenode] (ss2) [draw=none,above=1cm of ss1]{$\vdots$}; 

\node [squarenode] (ss4) [draw,below=1cm of ss0]{$2$};

\path[->] (s0) edge (r0);
\path[->] (s1) edge (r1);

\path[->] (ss0) edge (s0);
\path[->] (ss1) edge (s1);

\path[->] (s2) edge (s1);
\path[->] (s1) edge (s0);
\path[->] (s0) edge (s4);

\path[->] (ss4) edge (ss0);
\path[->] (ss0) edge (ss1);
\path[->] (ss1) edge (ss2);

\node [squarenode] (t2) [draw,below=1cm of ss4]{$\omega$};
\path[->] (t2) edge (ss4);
\path[->] (t1) edge (t2);
%===========================================

\node [roundnode,double] (r0) [draw,right=1.5cm of ss0]{$\bot$};
\node [roundnode,double] (r1) [draw,above=1cm of r0]{$\bot$};
\node [roundnode,double] (r2) [draw=none,above=1cm of r1]{$\vdots$};

\node [roundnode] (s0) [draw,right=1.1cm of r0] {$2$};
\node [roundnode] (s1) [draw,above=1cm of s0] {$2$};
\node [roundnode] (s2) [draw=none,above=1cm of s1]{$\vdots$}; 

\node [roundnode] (s4) [draw,below=1cm of s0]{$2$};
\path[->] (s4) edge (ss4);

\node [squarenode] (ss0) [draw,right=1.2cm of s0] {$3$};
\node [squarenode] (ss1) [draw,above=1cm of ss0] {$3$};
\node [squarenode] (ss2) [draw=none,above=1cm of ss1]{$\vdots$};

\node [squarenode] (ss4) [draw,below=1cm of ss0]{$3$};
\node [roundnode] (m4) [draw,right=2.2cm of ss4]{$3$};
\path[->] (m4) edge (ss4);

\path[->] (s0) edge (r0);
\path[->] (s1) edge (r1);

\path[->] (ss0) edge (s0);
\path[->] (ss1) edge (s1);

\path[->] (s2) edge (s1);
\path[->] (s1) edge (s0);
\path[->] (s0) edge (s4);

\path[->] (ss4) edge (ss0);
\path[->] (ss0) edge (ss1);
\path[->] (ss1) edge (ss2);

\node [squarenode] (t3) [draw,below=1cm of ss4]{$\omega$};
\path[->] (t3) edge (ss4);
\path[->] (t2) edge (t3);

\node [squarenode] (dum0) [right=3.7cm of t3]{$\omega$};
\node [squarenode] (dum1) [ above=1cm of dum0]{$4$};
\node [squarenode] (dum6) [draw=none,right=1.5cm of dum1]{$\cdots$};
\node [squarenode] (dum3) [draw=none, above=1cm of dum1]{$\vdots$};
\path[->] (dum0) edge (dum1);
\path[->] (dum1) edge (dum3);
\path[->] (t3) edge (dum0);
\node [squarenode] (dum2) [draw=none,right=1.5cm of dum0]{$\cdots$};

\path[->] (dum0) edge (dum2);
\path[->] (dum6) edge (dum1);

\node [squarenode] (dum4) [draw=none, above=1cm of m4]{$\vdots$};
\path[->] (dum4) edge (m4);
%===========================================
%===========================================
%===========================================
%===========================================

\node [roundnode,double] (r0) at(0,0) [draw]{$\bot$};
\node [roundnode,double] (r1) [draw,above=1cm of r0]{$\bot$};
\node [roundnode,double] (r2)[draw=none,above=1cm of r1]{$\vdots$};

\node [roundnode] (s0) [draw,right=1.1cm of r0] {$\omega$};
\node [roundnode] (s1) [draw,above=1cm of s0] {$\omega$};
\node [roundnode] (s2) [draw=none,above=1cm of s1]{$\vdots$};

\node [roundnode] (s4) [draw,below=1cm of s0]{$\omega$};

\node [squarenode] (ss0) [draw,right=1.2cm of s0] {$\omega {+} 1$};
\node [squarenode] (ss1) [draw,above=1cm of ss0] {$\omega {+} 1$};
\node [squarenode] (ss2) [draw=none,above=1cm of ss1]{$\vdots$};

\node [squarenode] (ss4) [draw,below=1cm of ss0]{$\omega {+} 1$};
\node [roundnode,inner sep=0pt] (m4) [draw,right=2.55cm of ss4]{$\omega {+} 1$};
\path[->] (m4) edge (ss4);

\path[->] (s0) edge (r0);
\path[->] (s1) edge (r1);

\path[->] (ss0) edge (s0);
\path[->] (ss1) edge (s1);

\path[->] (s2) edge (s1);
\path[->] (s1) edge (s0);
\path[->] (s0) edge (s4);

\path[->] (ss4) edge (ss0);
\path[->] (ss0) edge (ss1);
\path[->] (ss1) edge (ss2);

\node [squarenode] (tt1) [draw,below=1cm of ss4]{$\omega {\cdot} 2$};
\path[->] (tt1) edge (ss4);

\draw [->] (s4) --++(-1.6,0) --++(0,3.7)-- (t1);
%%===========================================
%
%
%
\node [squarenode] (dum0) [right=3.8cm of tt1]{$\omega {\cdot} 2$};
\node [squarenode] (dum1) [above=1cm of dum0]{$\omega {+} 2$};
\node [squarenode] (dum5) [draw=none, above=1cm of dum1]{$\vdots$};
\path[->] (dum0) edge (dum1);
\path[->] (dum1) edge (dum5);
\path[->] (tt1) edge (dum0);
\node [squarenode] (dum2) [draw=none,right=1.5cm of dum0]{$\cdots$};
\node [squarenode] (dum6) [draw=none,right=1.5cm of dum1]{$\cdots$};

\path[->] (dum0) edge (dum2);
\path[->] (dum6) edge (dum1);

\node [squarenode] (dum3) [left=1.2cm of tt1]{$\bot$};
\node [roundnode,double] (dum4) [draw,left=1.1cm of dum3]{$\bot$};
\path[->] (dum3) edge (dum4);
\path[->] (dum3) edge (tt1);

\node [squarenode] (dum5) [draw=none, above=1cm of m4]{$\vdots$};
\path[->] (dum5) edge (m4);
%===========================================
%\node[roundnode,initial,initial text={}] (i)  [left=5cm of t] {$i$};
%\draw [->] (i) --++(0,1.9)-- (s0) node[draw=none,fill=none,font=\scriptsize,midway,above] {\scriptsize{$\frac{1}{2}$}};
%\draw [->] (i) --++(0,-2.25)-- (ss0) node[draw=none,fill=none,font=\scriptsize,midway,below] {\scriptsize{$\frac{1}{2}$}};

\end{tikzpicture}
\caption{A finitely branching reachability game where the states of player~$\pz$
				are drawn as squares and the random states as circles. Player~$\po$ is passive in this game.
The states with double borders form the target set~$\reachset$; those states have self-loops which are not drawn in the figure.
For each random state, the distribution over the successors is uniform.
Each state is labeled with an ordinal, which indicates the \emph{index} of the
state.
In particular, the example shows that transfinite indices are needed.
}
\label{fig:exampleLimitOrdinal}
\end{figure*}
Each state has value~$1$ in this game, except those labeled with~$0$.
However, only the states labeled with~$\undef$ are almost-surely winning for player~$\pz$.
To see this, consider a player~$\pz$ state labeled with~$1$.
In order to reach~$\reachset$, player~$\pz$ eventually needs to take a transition to a $0$-labeled state, which is not almost-surely winning.
This means that the $1$-labeled states are not almost-surely winning either.
Hence, player~$\pz$ cannot risk entering them if the player wants to win almost surely.
%So the states labeled with $0$ or~$1$ can effectively be removed from the state space.
Continuing this style of reasoning, we infer that the $2$-labeled states are not almost-surely winning, and so on.
This implies that the $\omega$-labeled states are not almost-surely winning, and so on.
The only almost-surely winning player~$\pz$ state is the $\undef$-labeled state at the bottom of the figure, and the only winning strategy is to take the direct transition to the target in the bottom-left corner.
\end{example}

\newcommand{\oldgame}{\hat\game}%
\newcommand{\oldstates}{\hat\states}%
\newcommand{\oldstate}{\hat\state}%
\newcommand{\oldformula}{\hat\formula}%
\begin{proof}[Proof of Theorem~\ref{thm:amost-sure-strong-det}]
The first step of the proof is to transform the game and the objective so that the objective can in some respects be treated like a tail objective.
Let $\oldgame$ be a stochastic game with countable state space~$\oldstates$ and objective~$\oldformula$.
We convert the game graph to a forest by encoding the history in the states.
Formally we proceed as follows.
The state space, $\states$, of the new game, $\game$, consists of the partial plays in~$\oldgame$, i.e., $\states \subseteq \oldstates^* \oldstates$.
Observe that $\states$ is countable.
For any $\xsymbol \in \{\zsymbol,\osymbol,\rsymbol\}$ we define $\states_{\xsymbol} := \{w \oldstate \in \states \mid \oldstate \in \oldstates_{\xsymbol}\}$.
A transition is a transition of~$\game$ iff it is of the form $w \oldstate \transition w \oldstate \oldstate'$ where $w \oldstate \in \states$ and $\oldstate \transition \oldstate'$ is a transition in~$\oldgame$.
%For all transitions $\oldstate \transition \oldstate'$ in~$\oldgame$,
%there exists a transition $w \oldstate \transition w \oldstate \oldstate'$ in~$\game$ iff $w \oldstate \in \states$.
The probabilities in~$\game$ are defined in the obvious way.
For $\oldstate \in \oldstates$ we define an objective~$\formula_{\oldstate}$
so that a play in~$\game$ starting from the singleton $\oldstate\in\states$ 
satisfies~$\formula_{\oldstate}$ iff the corresponding play from
$\oldstate\in\oldstates$ in~$\oldgame$ satisfies~$\oldformula$.
%The objectives~$\formula_{\oldstate}$ are still measurable.
Since strategies in~$\game$ (for singleton initial states in~$\oldstates$) carry over to strategies in~$\oldgame$, it suffices to prove our determinacy result for~$\game$.

Let us inductively extend the definition of~$\formula_s$ from
$\state=\oldstate\in\oldstates$
to arbitrary $s \in \states$.
%Let $s = w \oldstate \in \states$.
For any transition $s \transition s'$ in~$\game$, define $\formula_{s'} := \{x \in s' \states^\omega \mid s x \in \formula_s\}$.
This is well-defined as the transition graph of~$\game$ is a forest.
For any $s \in \states$, the event~$\formula_{s}$ is also measurable.
By this construction %of~$\game$ and the predicates~$\formula_s$  
we obtain the following property:
If a play~$y$ in~$\game$ visits states $s, s' \in \states$ then the suffix of~$y$ starting from~$s$ satisfies~$\formula_s$ iff the suffix of~$y$ starting from~$s'$ satisfies~$\formula_{s'}$.
This property is weaker than the tail property (which would stipulate that all $\formula_s$ are equivalent), but it suffices for our purposes.

In the remainder of the proof, when $\game'$ is (a subgame of)~$\game$, we write $\probm_{\game',s,\zstrat,\ostrat}(\formula)$ for $\probm_{\game',s,\zstrat,\ostrat}(\formula_s)$ to avoid clutter.
Similarly, when we write $\valueof{\game'}{s}$ we mean the value with respect to~$\formula_s$.

In order to characterize the winning sets of the players, we construct a transfinite sequence of subgames $\game_\alpha$ of~$\game$, 
where $\alpha \in \ord$ is an ordinal number, by stepwise removing
certain states that are losing for player~$\pz$, along with their incoming
transitions. Thus some subgames $\game_\alpha$ may contain states without any
outgoing transitions (i.e., dead ends). Such dead ends are always
considered as losing for player~$\pz$. (Formally, one might add a self-loop to
such states and remove from the objective all plays that reach these states.)

Let $\states_\alpha$ denote the state space of the subgame~$\game_\alpha$.
We start with $\game_0 := \game$.
Given~$\game_\alpha$, denote by~$D_\alpha$ the set of states $s \in \states_\alpha$ with $\valueof{\game_\alpha}{s} < 1$.
For any $\alpha \in \ord\setminus\{0\}$ we define $\states_\alpha := \states \setminus \bigcup_{\gamma < \alpha} D_\gamma$.

Since the sequence of sets $\states_\alpha$
is non-increasing and $\states_0 = \states$ is countable, it follows that this
sequence of games $\game_\alpha$ converges (i.e., is ultimately constant)
at some ordinal~$\beta$ where $\beta \le \omega_1$ (the first uncountable ordinal).
That is, we have $\game_\beta = \game_{\beta +1}$.
Note in particular that $\game_\beta$ does not contain any dead
ends. (However, its state space $\states_\beta$ might be empty. In this case
it is considered to be losing for player~$\pz$.)

We define the \emph{index}, $I(\state)$, of a state $\state$ as the 
smallest ordinal~$\alpha$ with $s \in D_\alpha$, and as $\undef$ if such an ordinal does not exist.
For all states~$s \in \states$ we have:
\[
I(s) = \undef \ \Leftrightarrow \ s \in \states_\beta \ \Leftrightarrow \ \valueof{\game_\beta}{s} = 1
\]
We show that states~$s$ with $I(s) \in \ord$ are in ${\owinset{\formula}{<1}}_{\!\!\game}$, and states~$s$ with $I(s) = \undef$ are in ${\zwinset{\formula}{=1}}_{\!\!\game}$.

\smallskip{\noindent\bf Strategy~$\hat\ostrat_s$:}
For each $s \in \states$ with $I(s) \in \ord$ we construct a player~$\po$ strategy~$\hat\ostrat_s$ such that $\probm_{\game,s,\zstrat,\hat\ostrat_s}(\formula) < 1$ holds for all player~$\pz$ strategies~$\zstrat$. 
The strategy~$\hat\ostrat_s$ is defined inductively over the index~$I(s)$.

Let $s \in \states$ with $I(s) = \alpha \in \ord$.
In game~$\game_\alpha$ we have $\valueof{\game_\alpha}{s} < 1$.
So by weak determinacy (Theorem~\ref{thm:weak_borel_determinacy}) there is a strategy~$\hat\ostrat_s$ with $\probm_{\game_\alpha,s,\zstrat,\hat\ostrat_s}(\formula) < 1$ for all~$\zstrat$.
(For example, one may take a $(1-\valueof{\game_\alpha}{s})/2$-optimal player~$\po$ strategy).
We extend $\hat\ostrat_s$ to a strategy in~$\game$ as follows.
Whenever the play enters a state $s' \notin \states_\alpha$ (hence $I(s') < \alpha$) then $\hat\ostrat_s$ switches to the previously defined strategy $\hat\ostrat_{s'}$.
(One could show that only player~$\pz$ can take a transition leaving~$\states_\alpha$, although this is not needed at the moment.)

We show by transfinite induction on the index that
$\probm_{\game,s,\zstrat,\hat\ostrat_s}(\formula) < 1$ holds for all
player~$\pz$ strategies~$\zstrat$ and for all states $s \in \states$ with
$I(s) \in \ord$.

For the induction hypothesis, let $\alpha$ be an ordinal for which this holds for all states~$s$ with $I(s) < \alpha$.
For the inductive step, let $s \in \states$ be a state with $I(s) = \alpha$, and let $\zstrat$ be an arbitrary player~$\pz$ strategy in~$\game$.

Suppose that the play from~$s$ under the strategies $\zstrat,\hat{\ostrat}_s$ always remains in~$\states_\alpha$, i.e., the probability of ever leaving~$\states_\alpha$ under $\zstrat,\hat{\ostrat}_s$ is zero.
Then any play in~$\game$ under these strategies coincides with a play in~$\game_\alpha$, so we have $\probm_{\game,s,\zstrat,\hat\ostrat_s}(\formula) = \probm_{\game_\alpha,s,\zstrat,\hat\ostrat_s}(\formula) < 1$, as desired.
Now suppose otherwise, i.e., the play from~$s$ under $\zstrat,\hat{\ostrat}_s$, 
with positive probability, enters a state $s' \notin \states_\alpha$, hence $I(s') < \alpha$.
By the induction hypothesis we have $\probm_{\game,s',\zstrat',\hat\ostrat_{s'}}(\formula) < 1$ for any~$\zstrat'$.
Since the probability of entering~$s'$ is positive, we conclude $\probm_{\game,s,\zstrat,\hat\ostrat_s}(\formula) < 1$, as desired.

\smallskip{\noindent\bf Strategy~$\hat\zstrat$:}
For each $s \in \states$ with $I(s) = \undef$ (and thus $\state \in
\states_\beta$) we construct a player~$\pz$ strategy~$\hat\zstrat$ such that $\probm_{\game,s,\hat\zstrat,\ostrat}(\formula) = 1$ holds for all player~$\po$ strategies~$\ostrat$.
We first observe that if $s_1 \transition s_2$ is a transition in~$\game$ with $s_1 \in \ostates \cup \rstates$ and $I(s_2) \ne \undef$ then $I(s_1) \ne \undef$.
Indeed, let $I(s_2) = \alpha \in \ord$, thus $\valueof{\game_\alpha}{s_2} < 1$;
if $s_1 \in \states_\alpha$ %\cap (\ostates \cup \rstates)$
then $\valueof{\game_\alpha}{s_1} < 1$ and thus $I(s_1) = \alpha$;
if $s_1 \notin \states_\alpha$ then $I(s_1) < \alpha$.
It follows that only player~$\pz$ could ever leave the state space~$\states_\beta$,
but our player~$\pz$ strategy~$\hat\zstrat$ will ensure that the play remains in~$\states_\beta$ forever.
Recall that 
$\game_\beta$ does not contain any dead ends and that
$\valueof{\game_\beta}{s} = 1$ for all $s \in \states_\beta$.
For all $s \in \states_\beta$, by weak determinacy (Theorem~\ref{thm:weak_borel_determinacy}) we fix a strategy~$\zstrat_s$ with $\probm_{\game_\beta,s,\zstrat_s,\ostrat}(\formula) \ge 2/3$ for all~$\ostrat$.

Fix an arbitrary state $s_0 \in \states_\beta$ as the initial state.
For a player~$\pz$ strategy~$\zstrat$, define mappings $X^\zstrat_1, X^\zstrat_2, \ldots : s_0 \states^\omega \to [0,1]$ using conditional probabilities:
\[
X^\zstrat_i(w) := \inf_{\ostrat \in \ostratset_{\game_\beta}} \probm_{\game_\beta,s_0,\zstrat,\ostrat}(\formula \mid E_i(w))\,,
\]
where $E_i(w)$ denotes the event containing the plays that start with the length-$i$ prefix of~$w \in s_0 \states^\omega$.
Thanks to our ``forest'' construction at the beginning of the proof, $X^\zstrat_i(w)$
depends, in fact, only on the $i$-th state visited by~$w$.

For some illustration, a small value of~$X^\zstrat_i(w)$ means that considering the length-$i$ prefix of~$w$, player~$\po$ has a strategy that makes~$\formula$ unlikely at time~$i$.
Similarly, a large value of~$X^\zstrat_i(w)$ means that at time~$i$ (when the length-$i$ prefix has been ``uncovered'') the probability of~$\formula$ using~$\zstrat$ is large, regardless of the player~$\po$ strategy.

In the following we view~$X^\zstrat_i$ as a random variable (taking on a random value depending on a random play).

We define our almost-surely winning player~$\pz$ strategy~$\hat\zstrat$ as the limit of inductively defined strategies $\hat\zstrat_0, \hat\zstrat_1, \ldots$.
Let $\hat\zstrat_0 := \zstrat_{s_0}$.
Using the definition of~$\zstrat_{s_0}$ we get $X^{\hat\zstrat_0}_1 \ge 2/3$.
For any $k \in \N$, define $\hat\zstrat_{k+1}$ as follows.
Strategy~$\hat\zstrat_{k+1}$ plays~$\hat\zstrat_k$ as long as $X^{\hat\zstrat_k}_i \ge 1/3$.
This could be forever.
%If $X^{\hat\zstrat_k}_i \ge 1/3$ for all~$i$ then $\hat\zstrat_{k+1}$ plays~$\hat\zstrat_k$ forever;
Otherwise, let $i$ denote the smallest~$i$ with $X^{\hat\zstrat_k}_i < 1/3$, and let $s$ be the $i$-th state of the play.
At that time, $\hat\zstrat_{k+1}$ switches to strategy~$\zstrat_{s}$, implying $X^{\hat\zstrat_{k+1}}_i \ge 2/3$.
This switch of strategy is referred to as a ``reset'' in~\cite{GimbertHornSoda10}, where the concept is used similarly.
For any~$k$, strategy~$\hat\zstrat_k$ performs at most~$k$ such resets.
Define $\hat\zstrat$ as the limit of the $\hat\zstrat_k$, i.e., the number of resets performed by~$\hat\zstrat$ is unbounded.

In order to show that $\hat\zstrat$ is almost surely winning, we first argue that $\hat\zstrat$ almost surely performs only a finite number of resets.
Suppose $w \in S^\omega$ and $k,i$ are such that a $k$-th reset happens after visiting the $i$-th state in~$w$.
As argued above, we have $X^{\hat\zstrat_k}_i(w) \ge 2/3$.
Towards a contradiction assume that player~$\po$ has a strategy~$\ostrat_1$ to cause yet another reset with probability $p_1>1/2$, i.e.,
\[
p_1 := \probm_{\game_\beta,s_0,\hat\zstrat_k,\ostrat_1}(R \mid E_i(w)) > 1/2\,,
\]
where $R$ denotes the event of another reset after time~$i$.
If another reset occurs, say at time~$j$, then $X^{\hat\zstrat_k}_j(w) < 1/3$, and then player~$\po$ can switch to a strategy~$\ostrat_2$ to force $\probm_{\game_\beta,s_0,\hat\zstrat_k,\ostrat_2}(\formula \mid E_j(w)) \le 1/3$.
Hence:
\[
p_2 := \probm_{\game_\beta,s_0,\hat\zstrat_k,\ostrat_2}(\formula \mid R \land E_i(w)) \le 1/3
\]
Let $\ostrat_{1,2}$ denote the player~$\po$ strategy combining $\ostrat_1$ and~$\ostrat_2$.
Then it follows:
\begin{align*}
\probm_{\game_\beta,s_0,\hat\zstrat_k,\ostrat_{1,2}}(\formula \land R \mid E_i(w)) \ & = \ p_1 \cdot p_2 \qquad \text{and} \\
\probm_{\game_\beta,s_0,\hat\zstrat_k,\ostrat_{1,2}}(\formula \land \neg R \mid E_i(w)) \ & \le \ \probm_{\game_\beta,s_0,\hat\zstrat_k,\ostrat_{1,2}}(\neg R \mid E_i(w)) \\ 
& = \ 1-p_1
\end{align*}
Hence we have:
\begin{align*}
\probm_{\game_\beta,s_0,\hat\zstrat_k,\ostrat_{1,2}}(\formula \mid E_i(w)) 
\ & \le \ (p_1 \cdot p_2) + (1-p_1) \ \le \ 1 - \frac23 p_1 \\
  &  < \ 1 - \frac23 \cdot \frac12 \ = \ \frac23\,,
\end{align*}
contradicting $X^{\hat\zstrat_k}_i(w) \ge 2/3$.
So at time~$i$, the probability of another reset is bounded by~$1/2$.
Since this holds for every reset time~$i$, we conclude that almost surely there will be only finitely many resets under~$\hat\zstrat$, regardless of~$\ostrat$.

Now we can show that $\probm_{\game_\beta,s_0,\hat\zstrat,\ostrat}(\formula) = 1$ holds for all~$\ostrat$.
Fix~$\ostrat$ arbitrarily.
For $k \in \N$ define~$Q_k$ as the event that exactly $k$ resets occur.
Let us write $\probm_k = \probm_{\game_\beta,s_0,\hat\zstrat_k,\ostrat}$ to avoid clutter.
By L\'evy's zero-one law (see, e.g., \cite[Theorem~14.2]{Williams:Martingales}), for any~$k$, we have $\probm_k$-almost surely that either
\[
 (\formula \lor \neg Q_k) \land \lim_{i \to \infty} \probm_k(\formula \lor \neg Q_k \mid E_i(w)) = 1
\]
or
\[
 (\neg\formula \land Q_k) \land \lim_{i \to \infty} \probm_k(\formula \lor \neg Q_k \mid E_i(w)) = 0
\]
holds.
Let $w$ be a play that satisfies the second option.
In particular, $w \in Q_k$, so there exists $i_0 \in \N$ with $X^{\hat\zstrat_k}_i(w) \ge 1/3$ for all $i \ge i_0$.
It follows that $\probm_k(\formula \mid E_i(w)) \ge 1/3$ holds for all $i \ge i_0$.
But that contradicts the fact that $\lim_{i \to \infty} \probm_k(\formula \lor \neg Q_k \mid E_i(w)) = 0$.
So plays satisfying the second option do not actually exist.

Hence we conclude $\probm_k(\formula \lor \neg Q_k) = 1$, thus $\probm_k(\neg\formula \land Q_k) = 0$.
Since the strategies~$\hat\zstrat$ and~$\hat\zstrat_k$ agree on all finite prefixes of all plays in~$Q_k$,
the probability measures $\probm_{\game_\beta,s_0,\hat\zstrat,\ostrat}$ and~$\probm_k$ agree on all subevents of~$Q_k$.
It follows $\probm_{\game_\beta,s_0,\hat\zstrat,\ostrat}(\neg\formula \land Q_k) = 0$.
We have shown previously that the number of resets is almost surely finite, i.e., $\probm_{\game_\beta,s_0,\hat\zstrat,\ostrat}(\bigvee_{k \in \N} Q_k) = 1$.
Hence we have:
\begin{align*}
\probm_{\game_\beta,s_0,\hat\zstrat,\ostrat}(\neg\formula) \
& = \ \probm_{\game_\beta,s_0,\hat\zstrat,\ostrat}\Big(\neg\formula \land \bigvee_{k \in \N} Q_k\Big) \\
& \le \ \sum_{k \in \N} \probm_{\game_\beta,s_0,\hat\zstrat,\ostrat}(\neg\formula \land Q_k) \\
& = \ 0
\end{align*}
Thus, $\probm_{\game_\beta,s_0,\hat\zstrat,\ostrat}(\formula) = 1$.
Since $\hat\zstrat$ is defined on~$\game_\beta$, this strategy never leaves~$\states_\beta$.
Since only player~$\pz$ might have transitions that leave~$\states_\beta$, we conclude $\probm_{\game,s_0,\hat\zstrat,\ostrat}(\formula) = 1$.
\end{proof}

\begin{figure*}[t]
\centering
\begin{tikzpicture}[>=latex',shorten >=1pt,node distance=1.9cm,on grid,auto,
roundnode/.style={circle, draw,minimum size=1.5mm},
squarenode/.style={rectangle, draw,minimum size=2mm},
diamonddnode/.style={diamond,draw, minimum size=8mm, inner sep=-10pt}]

\node [squarenode] (s0) at(0,0) [draw]{$s_0$};
\node [squarenode] (s1) [draw,right=1.6cm of s0]{$s_1$};
\node [squarenode] (s2) [draw,right=1.6cm of s1]{$s_2$};
\node [squarenode] (s3) [draw=none,right=1.4cm of s2]{$\cdots$}; 
\node [squarenode] (s4) [draw,right=1.6cm of s3]{$s_i$}; 
\node [squarenode] (s5) [draw=none,right=1.6cm of s4]{$\cdots$}; 
 
\node[roundnode] (r0)  [below=.9cm of s0] {$r_0$};
\node[roundnode] (r1)  [below=.9cm of s1] {$r_1$};
\node[roundnode] (r2)  [below=.9cm of s2] {$r_2$};
\node[roundnode] (r3)  [draw=none,below=.9 of s3] {$\cdots$};
\node[roundnode] (r4)  [below=.9cm of s4] {$r_i$};
\node[roundnode] (r5)  [draw=none,below=.9cm of s5] {$\cdots$};

\node [squarenode,double] (t)  [below=1cmof r2] {$ t $};

\node [diamonddnode,double] (ss0) [draw,below=3.25cm of r0] [draw]{$s'_0$};
\node [diamonddnode,double] (ss1) [draw,right=1.6cm of ss0]{$s'_1$};
\node [diamonddnode,double] (ss2) [draw,right=1.6cm of ss1]{$s'_2$};
\node [diamonddnode,double] (ss3) [draw=none,right=1.4cm of ss2]{$\cdots$}; 
\node [diamonddnode,double] (ss4) [draw,right=1.6cm of ss3]{$s'_i$}; 
\node [diamonddnode,double] (ss5) [draw=none,right=1.6cm of ss4]{$\cdots$}; 
 
\node[roundnode,inner sep=0.9mm] (rr0)  [below=-1.2cm of ss0] {$r'_0$};
\node[roundnode,inner sep=0.9mm] (rr1)  [below=-1.2cm of ss1] {$r'_1$};
\node[roundnode,inner sep=0.9mm] (rr2)  [below=-1.2cm of ss2] {$r'_2$};
\node[roundnode,inner sep=0.9mm] (rr3)  [draw=none,below=-1.3 of ss3] {$\cdots$};
\node[roundnode,inner sep=0.9mm] (rr4)  [below=-1.2cm of ss4] {$r'_i$};
\node[roundnode,inner sep=0.9mm] (rr5)  [draw=none,below=-1.2cm of ss5] {$\cdots$};

\path[->] (s0) edge (s1);
\path[->] (s1) edge (s2);
\path[->] (s2) edge (s3);
\path[->] (s3) edge (s4);
\path[->] (s4) edge (s5);

\path[->] (s0) edge (r0);
\path[->] (s1) edge (r1);
\path[->] (s2) edge (r2);
\path[->] (s4) edge (r4);

\path[->] (r1) edge node [very near start,below=.05cm] {\scriptsize{$\frac{1}{2}$}} (t);
\path[->] (r2) edge node [near start,right=0.05cm] {\scriptsize{$\frac{1}{2}$}} (t);
\path[->] (r4) edge node [very near start,below=.05cm] {\scriptsize{$\frac{1}{2}$}} (t);

\path[->] (r1) edge node[above] {\scriptsize{$\frac{1}{2}$}} (r0);   
\path[->] (r2) edge node [above] {\scriptsize{$\frac{1}{2}$}} (r1);  
\path[->] (r3) edge node [above] {\scriptsize{$\frac{1}{2}$}} (r2);
\path[->] (r4) edge node [above] {\scriptsize{$\frac{1}{2}$}} (r3);

\path[->] (t) edge [loop left]   ();	
\path[->] (r0) edge [loop left]   ();	
\path[->] (rr0) edge [loop left]   ();	

\path[->] (ss0) edge (ss1);
\path[->] (ss1) edge (ss2);
\path[->] (ss2) edge (ss3);
\path[->] (ss3) edge (ss4);
\path[->] (ss4) edge (ss5);

%\path[->] (ss0) edge (rr0);
\path[->] (ss1) edge (rr1);
\path[->] (ss2) edge (rr2);
\path[->] (ss4) edge (rr4);

\path[->] (rr1) edge node [very near start,above=.05cm] {\scriptsize{$\frac{1}{2}$}} (t);
\path[->] (rr2) edge node [near start,right=0.05cm] {\scriptsize{$\frac{1}{4}$}} (t);
\path[->] (rr4) edge node [very near start,above=.05cm] {\scriptsize{$\frac{1}{2^i}$}} (t);

\path[->] (rr1) edge node[above] {\scriptsize{$\frac{1}{2}$}} (rr0);   
\path[->] (rr2) edge [bend left=22] node [pos=0.02,below] {\scriptsize{$\frac{3}{4}$}} (rr0);  
\path[->] (rr4) edge [bend left=22] node [pos=0.05,below] {\scriptsize{$1-\frac{1}{2^{i}}$}} (rr0);  

\node[roundnode,initial,initial text={}] (i)  [left=5cm of t] {$i$};
\draw [->] (i) --++(0,1.9)-- (s0) node[draw=none,fill=none,font=\scriptsize,midway,above] {\scriptsize{$\frac{1}{2}$}};
\draw [->] (i) --++(0,-2.25)-- (ss0) node[draw=none,fill=none,font=\scriptsize,midway,below] {\scriptsize{$\frac{1}{2}$}};

\end{tikzpicture} 
\caption{A finitely branching game where the states  of players $\pz$ and~$\po$
				are drawn as squares and diamonds, respectively;   random states~$s\in \rstates$
         are drawn as circles.
                                The states $s_i'$ and state $t$ (double
                                borders) are B\"uchi states, all other
                                states are not.
				The value of the initial state~$i$ is
                                $\frac{1}{2}$, for the B\"uchi objective~$\formula$. 
				However, $i\not \in \zwinset{\formula}{\geq \frac{1}{2}} \uplus 
\owinset{\formula}{\not > \frac{1}{2}}$, meaning that neither player has a winning strategy, neither for the objective $\quantobj{\formula}{\mathord{\ge} 1/2}$ nor for $\quantobj{\formula}{\mathord{>} 1/2}$.}
\label{fig:BuchiNOStrongDet}
\end{figure*}

\subsection{Reachability and Safety}

It was shown in~\cite{BBKO:IC2011} and~\cite{Krcal:Thesis:2009} 
(and also follows as a corollary from \cite{Brozek:TCS2013})
that finitely branching games with reachability objectives with any threshold
$\constraint\const$ with $\const \in [0,1]$ are strongly determined.
In contrast, strong determinacy 
does not hold for infinitely branching reachability games 
with thresholds $\constraint\const$ with $\const \in (0,1)$; cf.\ Figure 1 in \cite{BBKO:IC2011}.
However, by Theorem~\ref{thm:amost-sure-strong-det},
strong determinacy does hold for almost-sure reachability and safety objectives in infinitely branching games.
By duality, this also holds for reachability and safety objectives with threshold~$\mathord{>}0$.
(For almost-sure safety (resp.\ $>0$ reachability), 
this could also be shown by a reduction to
non-stochastic 2-player reachability games \cite{Zielonka:1998}.)

\subsection{B\"uchi and co-B\"uchi}

Let $\formula$ be the B\"uchi objective (the co-B\"uchi objective is dual).
Again, Theorem~\ref{thm:amost-sure-strong-det} applies to almost-sure and positive-probability B\"uchi and co-B\"uchi objectives, so those games are strongly determined, even infinitely branching ones.

However, this does not hold for thresholds $c \in (0,1)$, not even for finitely branching games:

\newcommand{\thmBuchinostrongdet}{
Threshold (co-)B\"uchi objectives $\quantobj{\formula}{\constraint\const}$
with thresholds $\const \in (0,1)$
are not strongly determined, even for
finitely branching games.
}
\begin{theorem}\label{thm:Buchi-no-strong-det}
\thmBuchinostrongdet
\end{theorem}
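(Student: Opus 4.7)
The plan is to analyse the finitely branching game $\game$ of Figure~\ref{fig:BuchiNOStrongDet} and show that its initial state $i$ witnesses non-determinacy for the Büchi threshold $\geq \frac{1}{2}$; I will then explain how to modify the initial distribution to obtain a counterexample for any $\const\in(0,1)$ and any of the four threshold types (including co-Büchi). First I would compute the value of each of the two branches in isolation. In the top branch, controlled by player~$\pz$, the only reachable Büchi state is~$t$. From $r_j$ a standard gambler-style calculation on the up/down $\frac{1}{2}{:}\frac{1}{2}$ random walk gives that $t$ is reached with probability exactly $1-2^{-j}$, so by advancing $j$ steps on the $s_i$-sequence and then branching to $r_j$, player~$\pz$ realises top-branch Büchi probability $1-2^{-j}$. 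Hence the supremum is $1$, but for any $\pz$-strategy the top-branch probability is strictly less than $1$: any strategy that branches with positive probability at some finite~$i$ incurs a non-zero loss, and one that never branches never visits a Büchi state. Symmetrically, in the bottom branch the $s_i'$ states are themselves Büchi, so staying on the $s'$-sequence forever yields probability~$1$; branching at $s_i'$ to $r_i'$ gives exactly $2^{-i}$. Thus the bottom-branch value is $0$ (infimum) but no $\po$-strategy achieves it.

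Given that the initial state~$i$ splits $\frac{1}{2}{:}\frac{1}{2}$, the value of $i$ is $\frac{1}{2}$. To disprove strong determinacy at threshold~$\frac{1}{2}$, I would argue both directions. Fix a player~$\pz$ strategy $\zstrat$ with top probability $p<1$; player~$\po$ can respond by branching at $s_k'$ for $k$ large enough that $2^{-k}<1-p$, giving total Büchi probability $\tfrac{1}{2}(p+2^{-k})<\tfrac{1}{2}$, so $\zstrat$ does not witness $\quantobj{\formula}{\geq\frac{1}{2}}$. Conversely, fix any player~$\po$ strategy $\ostrat$ with bottom probability $q>0$; player~$\pz$'s deterministic memoryless response of advancing $j$ steps and branching once, for $j$ large enough that $2^{-j}\leq q$, achieves top probability $1-2^{-j}$ and hence total probability $\tfrac{1}{2}((1-2^{-j})+q)\geq\tfrac{1}{2}$, so $\ostrat$ does not witness $\quantobj{\formula}{<\frac{1}{2}}$. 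The strict-inequality threshold $>\frac{1}{2}$ is handled identically (picking $j$ slightly larger).

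To get a counterexample at an arbitrary threshold $\const\in(0,1)$ I would simply replace the initial distribution at~$i$ by $(\const,1-\const)$: the value of $i$ becomes $\const$, and the argument above carries over verbatim after rescaling the two inequalities $2^{-k}<(1-p)\const/(1-\const)$ and $2^{-j}\leq (1-\const)q/\const$. For co-Büchi, I would invoke duality: exchanging the owners of the top and bottom branches and negating the Büchi set turns the construction into a finitely branching game where the co-Büchi threshold objective fails to be strongly determined at $1-\const$; since $\const$ ranges over $(0,1)$, so does $1-\const$.

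The only delicate point, and the one I expect to treat most carefully, is verifying that the top- and bottom-branch values of $1$ and $0$ are \emph{not attained} by any strategy; everything downstream of that observation is routine arithmetic on the two disjoint branches. In particular, I would be explicit that even a randomised $\pz$-strategy that distributes positive mass across infinitely many branching times $\{N=i\}$ still has top probability $1-\sum_i\Pr(N{=}i)2^{-i}<1$, so no clever mixing escapes the strict inequality.
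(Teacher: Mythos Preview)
Your proposal is correct and follows essentially the same approach as the paper: both analyse the game of Figure~\ref{fig:BuchiNOStrongDet}, show that the top branch has value~$1$ (unattained by any~$\zstrat$) and the bottom branch has value~$0$ (unattained by any~$\ostrat$), and conclude that at the initial state neither player can force the threshold~$\tfrac12$; the generalisation to arbitrary $\const\in(0,1)$ via the initial split and the passage to co-B\"uchi by duality are exactly what the paper indicates. Two cosmetic remarks: the $r_j$-chain is not an ``up/down random walk'' but a one-directional chain (each $r_j$ goes to $t$ or to $r_{j-1}$ with probability $\tfrac12$ each), though your value $1-2^{-j}$ is correct; and your duality sentence for co-B\"uchi would read more cleanly as ``$\quantobj{\buchi{\reachset}}{\ge\const}$ fails to be strongly determined iff $\quantobj{\text{co-B\"uchi}(\reachset)}{>1-\const}$ does, in the game with the players' roles swapped''.
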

A fortiori, threshold parity objectives are not strongly determined, not even for finitely branching games.
We prove Theorem~\ref{thm:Buchi-no-strong-det} using the finitely branching
game in Figure~\ref{fig:BuchiNOStrongDet}.
It is inspired by an infinitely branching example in~\cite{BBKO:IC2011}, where it was shown that threshold reachability objectives in infinitely branching games are not strongly determined.

\begin{proof}[Proof sketch of Theorem~\ref{thm:Buchi-no-strong-det}]
The game in Figure~\ref{fig:BuchiNOStrongDet} is finitely branching,
and we consider the B\"uchi objective.
The infinite choice for player~$\po$ in the example of~\cite{BBKO:IC2011} is simulated with an infinite chain~$s'_0 s'_1 s'_2 \cdots$ of 
B\"uchi states in our example. All  states~$s'_0 s'_1 s'_2 \cdots$ are finitely branching and belong to player~$\po$.
The crucial property is that player~$\po$ can stay in the states~$s'_i$ for
arbitrarily long (thus making the probability of reaching the state $t$ arbitrarily
small) but not forever. Since the states~$s'_i$ are B\"uchi states, plays that
stay in them forever satisfy the B\"uchi objective surely, something that
player~$\po$ needs to avoid. 
So a player~$\po$ strategy must choose a transition $s'_i\transition{}r'_i$ for some~$i\in \nat$, 
resulting in a faithful simulation of infinite branching from $s_0'$ to some state
$r_i'$, just like in the reachability game in~\cite{BBKO:IC2011}.

%The high-level intuition behind the construction is as follows.
From the fact that~$\valueof{\game}{r_i}=1-2^{-i}$ and $\valueof{\game}{r'_i}=2^{-i}$, we deduce the following properties of this game:
\begin{itemize}
	\item $\valueof{\game}{s_0}=1$, but there exists no optimal strategy starting in~$s_0$. 
	The value is witnessed by a family of $\epsilon$-optimal strategies~$\sigma_i$: 
	traversing the ladder~$s_0 s_1\cdots s_i$ and  choosing~$s_i\transition{r_i}$.
	\item $\valueof{\game}{s'_0}=0$, but there exists no optimal minimizing strategy starting in~$s'_0$; however,
	in analogy with~$s_i$, there are~$\epsilon$-optimal strategies. 
	\item  $\valueof{\game}{i}=\frac{1}{2}$. We argue below that neither player has an optimal strategy starting in~$i$.
It follows that $i\not \in \zwinset{\formula}{\geq \frac{1}{2}} \uplus 
\owinset{\formula}{\not >\frac{1}{2}}$ for the B\"uchi condition~$\varphi$.
So neither player has a winning strategy, neither for $\quantobj{\formula}{\mathord{\ge} 1/2}$ nor for $\quantobj{\formula}{\mathord{>} 1/2}$.
Indeed, consider any player~$\pz$ strategy~$\zstrat$.
Following~$\sigma$,
once the game is in~$s_0$, B\"uchi states cannot be visited with probability more than~$\frac{1}{2}\cdot (1-\epsilon)$
for some fixed $\epsilon>0$ and all strategies~$\pi$. % otherwise player~$\pz$ had an optimal strategy starting in~$s_0$.
Player~$\po$ has an $\frac{\epsilon}{2}$-optimal strategy~$\ostrat$ starting in~$s'_0$.
Then we have:
$$
\probm_{\game,i,\zstrat,\ostrat}(\formula) \le \frac{1}{2}\cdot (1-\epsilon) +
\frac{1}{2}\cdot\frac{\epsilon}{2} < \frac{1}{2}\,,$$
so $\sigma$ is not optimal.
One can argue symmetrically that player~$\po$ does not have an optimal strategy either.
\end{itemize}
In the example in Figure~\ref{fig:BuchiNOStrongDet}, the game branches from
state $i$ to $s_0$ and $s_0'$ with probability $1/2$ respectively.
However, the above argument can be adapted to work for probabilities $\const$ and
$1-\const$ for every constant $\const \in (0,1)$. 
\end{proof}

\section{Memory Requirements}\label{sec:memory}
In this section we study how much memory is needed to win 
objectives $\quantobj{\formula}{\constraint\const}$, depending on
$\formula$ and on the 
constraint $\constraint\const$.

We say that an objective $\quantobj{\formula}{\constraint\const}$ is
\emph{strongly MD-determined} iff for every state $\state$ either
\begin{itemize}
\item  
there exists 
an MD-strategy~$\zstrat$ such that, for all $\ostrat \in \ostratset$,  
we have $\probm_{\game,\state,\zstrat,\ostrat}(\formula) \constraint \const$,
or
\item there exists 
an MD-strategy~$\ostrat$ such that, for all $\zstrat \in \zstratset$,  
we have $\probm_{\game,\state,\zstrat,\ostrat}(\formula) \nconstraint \const$.
\end{itemize}
If a game is strongly MD-determined then it is also strongly determined,
but not vice-versa.
Strong FR-determinacy is defined analogously.

\subsection{Reachability and Safety Objectives}\label{sec:reachability}

Let $\reachset \subseteq \states$ and
$\quantobj{\reach{\reachset}}{\constraint\const}$
be a threshold reachability objective.
(Safety objectives are dual to reachability.) 

Let us briefly discuss infinitely branching reachability games.
If $\const \in (0,1)$ then strong determinacy does not hold; cf.\ Figure 1 in \cite{BBKO:IC2011}.
Objectives $\quantobj{\reach{\reachset}}{\ge 1}$ are strongly determined
(Theorem~\ref{thm:amost-sure-strong-det}), but not strongly
FR-determined, because player~$\po$ needs infinite memory (even if player~$\pz$
is passive)~\cite{Kucherabook11}.
Objectives $\quantobj{\reach{\reachset}}{> 0}$ correspond to non-stochastic
2-player reachability games, which are strongly MD-determined \cite{Zielonka:1998}.

In the rest of this subsection we consider finitely branching reachability games. 
It is shown in~\cite{BBKO:IC2011,Krcal:Thesis:2009} that finitely branching reachability games are
strongly determined, but the winning $\pz$ strategy constructed therein uses infinite memory.
Indeed, Ku\v{c}era~\cite{Kucherabook11} showed that infinite memory is necessary in general:
\begin{theorem}[follows from Proposition~5.7.b in~\cite{Kucherabook11}]\label{thm:reach_not_fr_det} 
Finitely branching reachability games with 
$\quantobj{\reach{\reachset}}{\ge\const}$ objectives are not
strongly FR-determined for $\const \in (0,1)$.
\end{theorem}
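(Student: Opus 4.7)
The plan is to produce a counterexample: a finitely branching stochastic game with a reachability objective and a threshold $\const \in (0,1)$ at some initial state at which neither player has an FR winning strategy. By the strong determinacy of finitely branching reachability games (recalled above from \cite{BBKO:IC2011,Krcal:Thesis:2009,Brozek:TCS2013}), exactly one of the players has an HR winning strategy at the initial state. Hence it suffices to exhibit a game in which the strong-determinacy winner has no FR winning strategy; the loser's lack of an FR winning strategy then follows automatically.

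The construction I would use is a Kučera-style ``level'' gadget with a history-sensitive challenge. Consider a random chain $q_0 \transition q_1 \transition q_2 \transition \cdots$ where each $q_n$ advances to $q_{n+1}$ with probability $1/2$ and commits with probability $1/2$ by moving to a player-$\po$ state $p_n$. At $p_n$, player~$\po$ chooses between (i)~a fixed random gamble that reaches $\reachset$ with probability exactly $\const$ and (ii)~a challenge sub-game whose value is calibrated so that an HR player~$\pz$ that can read $n$ off the history achieves a value $h_n > \const$, while any FR player~$\pz$ with memory set $\memory$ is forced, for all $n$ larger than some bound depending on $|\memory|$, to a value $\ell_n < \const$ strictly bounded below $\const$. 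Against HR player~$\pz$, player~$\po$ then optimally picks (i) at every $p_n$, yielding total value exactly $\const$ from $q_0$ and making $\pz$ the strong-determinacy winner; against any FR player~$\pz$, player~$\po$ picks (ii) for the large $n$, dragging the total value strictly below $\const$.

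The main obstacle is the pigeonhole/pumping argument showing that FR player~$\pz$ strategies uniformly fail for large~$n$, and the calibration so that the resulting deficit at $q_0$ is a strictly positive constant. Given an FR strategy with finite memory $\memory$, one tracks the transducer's memory mode along the deterministic challenge path indexed by $n$; by pigeonhole, the same mode must recur for two distinct indices $n<n'$, and cut-and-paste between the corresponding histories either produces an infinite loop that reaches the target with probability $0$ or caps player~$\pz$'s effective response level uniformly, bounding the challenge value by a constant strictly less than $h_n$ for all sufficiently large $n$. Combined with the strictly positive probability $2^{-(n+1)}$ that the chain commits at level~$n$, this gives a strictly positive loss below $\const$ for every FR strategy of player~$\pz$. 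Assembling the gadget so that all of (value equals $\const$, HR attains it, FR misses it by a positive constant) hold simultaneously is the content of Kučera's \cite[Proposition~5.7.b]{Kucherabook11}, from which the theorem then follows.
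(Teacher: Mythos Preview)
The paper does not supply its own proof of this theorem; it merely cites Ku\v{c}era's Proposition~5.7.b and records four structural features of his example (player~$\pz$ has value-decreasing transitions, player~$\po$ has value-increasing transitions, $\const\in(0,1)$, nonstrict inequality). Your proposal likewise defers the actual construction to \cite{Kucherabook11}, so at the level of approach you and the paper coincide.

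One caveat on the sketch you add. As phrased, the ``challenge sub-game'' is doing all the work and its specification is not quite coherent. For your pigeonhole argument on the FR transducer to bite, the challenge sub-games at different levels~$n$ must share their state set; otherwise an M-strategy already distinguishes the levels via the current state, and no memory is needed at all. But if the states are shared, the value at the entry point is intrinsic to the game and cannot vary with the history length~$n$, so there is no well-defined ``value~$h_n$'' that depends on~$n$. What is really required is that player~$\po$ also acts inside (or en route to) the shared gadget with knowledge of~$n$ from the history, so that player~$\pz$'s best \emph{response} --- not the value --- is $n$-dependent; this is the mechanism your pigeonhole/pumping paragraph is implicitly reaching for, but the sketch does not build it. Since you explicitly hand that construction back to \cite{Kucherabook11}, this is not a gap relative to what the paper does; it just means your outline would not stand as a self-contained proof.
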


The example from~\cite{Kucherabook11} that proves Theorem~\ref{thm:reach_not_fr_det} has the following properties:
\begin{enumerate}
\item[(1)] player~$\pz$ has \emph{value-decreasing} (see below) transitions;
\item[(2)] player~$\po$ has \emph{value-increasing} (see below) transitions;
\item[(3)] threshold $c \ne 0$ and $c \ne 1$;
\item[(4)] nonstrict inequality: ${\ge} c$.
\end{enumerate}
Given a game~$\game$, we call a transition $\state \transition \state'$ \emph{value-decreasing (resp., value-increasing)} if $\valueof{\game}{\state} > \valueof{\game}{\state'}$ (resp., $\valueof{\game}{\state} < \valueof{\game}{\state'}$).
If player~$\pz$ (resp., player~$\po$) controls a transition $\state \transition \state'$, i.e., $\state \in \zstates$ (resp., $\state \in \ostates$), then the transition cannot be value-increasing (resp., value-decreasing).
We write $\rvi(\game)$ for the game obtained from~$\game$ by removing the
value-increasing transitions controlled by player~$\po$. 
Note that this operation does not create any dead ends in finitely branching games, 
because at least one transition to a successor state with the same value will
always remain for such games.

We show that a reachability game is strongly MD-determined if any of the properties listed above is not satisfied:
\begin{theorem}\label{thm:reachability}
Finitely branching games $\game$ with reachability objectives
$\quantobj{\reach{\reachset}}{\constraint\const}$
are strongly MD-determined, provided that at least one of the following
conditions holds.
\begin{enumerate}
\item[(1)] player~$\pz$ does not have value-decreasing transitions, or
\item[(2)] player~$\po$ does not have value-increasing transitions, or
\item[(3)] almost-sure objective: $\mathord{\constraint} = \mathord{\ge}$ and $\const=1$, or
\item[(4)] strict inequality: $\mathord{\constraint} = \mathord{>}$.
\end{enumerate}
\end{theorem}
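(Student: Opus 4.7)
The plan is to treat each of the four conditions by reducing the two-player game to a simpler structure in which an MD winning strategy becomes visible, leaning on the Bellman-style facts recalled just before the theorem ($\pz$-transitions are never value-increasing; $\po$-transitions are never value-decreasing) together with finite branching (which guarantees that the sup/inf in the Bellman equations are attained and that every state has a value-preserving outgoing transition, so trimmed subgames remain dead-end-free). I would handle the four hypotheses in the order (1), (2), (3), (4); in each case, once the reduction is in place, the MD strategy is obtained by greedily picking a successor that optimises the local value.

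For condition (1), the hypothesis combined with the Bellman fact gives that every $\pz$-transition is value-preserving, so any MD $\pz$-strategy is already globally optimal. The induced object is a countable, finitely branching MDP in which $\po$ minimises the reachability probability; here the MD $\po$-strategy that always picks a successor of smallest value (attained by finite branching) realises the minimum reachability probability from every state, and comparing the common value with $\constraint c$ partitions $\states$ into $\zwinset{\reach{\reachset}}{\constraint c} \uplus \owinset{\reach{\reachset}}{\nconstraint c}$ with MD witnesses on both sides. Condition (2) is handled symmetrically: fix any MD $\po$-strategy that picks a value-preserving successor (guaranteed by finite branching) and reduce to a $\pz$-controlled MDP whose value function coincides with that of $\game$; the MD $\pz$-strategy that greedily maximises value then realises $\valueof{\game}{s}$ from every $s$.

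For condition (3), Theorem~\ref{thm:amost-sure-strong-det} already yields strong determinacy; I would upgrade it to MD by the standard finite-branching characterisation of the almost-sure winning region $W$ as the greatest set from which $\pz$ can force, within a bounded number of moves, a visit to $\reachset$ while staying inside $W$. This induces a rank function on $W$ whose rank-decreasing MD choice for $\pz$ is almost-surely winning, and on $\states \setminus W$ the maximality of $W$ provides $\po$ with an MD escape preserving $\states \setminus W$ with positive probability, which is enough to prevent almost-sure reachability. For condition (4), I would exploit the strict slack: given $\valueof{\game}{s} > c$, pass to the subgame obtained by deleting every transition whose target has value $\le c$; by finite branching this subgame still has no dead ends on $\{t : \valueof{\game}{t} > c\}$ and, by construction, falls under condition (1). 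The case-(1) result inside this subgame then yields an MD $\pz$-strategy guaranteeing probability $> c$, and the dual construction handles $\po$'s winning states.

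The main obstacle I foresee is case (2), because maximising reachability in a countable MDP does not in general admit MD optimal strategies, and only $\epsilon$-optimal ones are available in the style of Ornstein. The saving feature is that the reduced MDP inherits the Bellman equations of $\game$, so a value-based greedy MD choice of $\pz$ preserves values at deterministic moves and preserves them in expectation at random moves; the induced value process is then a bounded submartingale, and a L\'evy-style zero--one argument (paralleling the one used in the proof of Theorem~\ref{thm:amost-sure-strong-det}) converts this into the desired bound $\probm_{\game,s,\hat\zstrat,\ostrat}(\reach{\reachset}) \constraint c$ against every $\ostrat$. Pinning this martingale step down \emph{uniformly} across initial states, rather than patching state-local $\epsilon$-optimal strategies, is the point on which I expect the full proof to hinge.
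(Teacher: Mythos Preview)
Your proposal has a genuine gap at the very first step. In case~(1) you write that because every $\pz$-transition is value-preserving, ``any MD $\pz$-strategy is already globally optimal''. This is false. Take $\zstates = \{s_0, s_1, \ldots\}$, target $t$, transitions $s_i \transition s_{i+1}$ and $s_i \transition t$ for every~$i$. Every state has value~$1$, every $\pz$-transition is value-preserving, yet the MD strategy $s_i \mapsto s_{i+1}$ achieves probability~$0$. Value-preservation of each step says nothing about whether the target is reached in the limit; this is precisely the phenomenon flagged in the introduction (infimum of nonzero values can be~$0$) and illustrated by the $s_i$-chain in Figure~\ref{fig:BuchiNOStrongDet}. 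The paper's Lemma~\ref{lem:reach-opt-uniform} is devoted exactly to building, under hypothesis~(1), a \emph{specific} MD $\pz$-strategy that is optimal, by fixing choices on an increasing sequence of finite envelopes ${\it Env}_i$ and showing the values do not drop; this is the nontrivial content of case~(1), and your plan skips it.

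The same gap propagates to cases~(2) and~(3). You correctly sense that case~(2) is the crux, but your fix---a greedy MD $\pz$-choice plus a submartingale/L\'evy argument---runs into the same counterexample: on the chain above (which trivially satisfies~(2), having no $\po$-states), greedy is ambiguous and the value process is identically~$1$ along $s_0 s_1 s_2 \cdots$ while the reachability probability is~$0$. The paper instead proves Lemma~\ref{lem:pre-reach-unreach-geq-MD} by a \emph{transfinite} removal of value-decreasing $\pz$-transitions (after first applying $\rvi$), reaching a fixpoint $\game_\beta$ on which Lemma~\ref{lem:reach-opt-uniform} applies; the states whose value dropped along the way are shown to be $\po$-wins via an index argument. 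Your rank-function sketch for case~(3) is a shadow of this, but you assume ranks in~$\N$ (``bounded number of moves''), whereas Example~\ref{ex:exampleLimitOrdinal} shows transfinite indices are unavoidable even for reachability. Finally, your case~(4) reduction does not land in case~(1): deleting transitions to value-$\le c$ targets leaves value-decreasing $\pz$-transitions (e.g.\ $0.9 \to 0.7$ when $c=0.5$) and breaks random states whose support straddles~$c$. The paper handles $\valueof{\game}{s_0}>c$ by Lemma~\ref{lem:reach-approx}: a finite-horizon $\epsilon$-optimal strategy plays on a finite subgame, where MD suffices.
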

\begin{remark}
\rm
Condition (1) or~(2) of Theorem~\ref{thm:reachability} is trivially satisfied
if the corresponding player is passive, i.e., in MDPs.
It was already known that MD strategies are sufficient for safety and reachability
objectives in countable finitely branching MDPs (\cite{Puterman:book}, Section
7.2.7).
Theorem~\ref{thm:reachability} generalizes this result.
\end{remark}

\begin{remark}\label{rem:inf-branching-inf-memory}
\rm
Theorem~\ref{thm:reachability} does not carry over to
stochastic reachability games with an arbitrary number of players, not even
if the game graph is finite.
Instead multiplayer games can require infinite memory to win.
Proposition 4.13 in \cite{Ummels-Wojtczak:LMS2011} 
constructs an
11-player finite-state stochastic reachability game with a pure
subgame-perfect Nash equilibrium where the first player wins almost surely
by using infinite memory. However, there is no finite-state Nash equilibrium
(i.e., an equilibrium where all players are limited to finite memory)
where the first player wins with positive probability.
That is, the first player cannot win with only finite memory, not even if the other players
are restricted to finite memory.
\end{remark}
The rest of the subsection focuses on the proof of Theorem~\ref{thm:reachability}.
We will need the following result from~\cite{BBKO:IC2011}:

\begin{lemma}\label{lem:optmin}{\bf (Theorem~3.1 in \cite{BBKO:IC2011})}
If $\game$ is a finitely branching reachability game then there is an MD strategy 
$\ostrat \in \ostratset$ that is optimal minimizing in every $\po$ state
(i.e., $\valueof{\game}{\ostrat(\state)} = \valueof{\game}{\state}$). 
\end{lemma}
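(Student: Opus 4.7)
\medskip\noindent\textbf{Proof proposal.}
My plan is to derive the claim from a Bellman-type equation at every $\po$-state, using only weak determinacy (\cref{thm:weak_borel_determinacy}), and then exploit finite branching to turn the resulting infimum into an attained minimum.

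First, I would show that for every $\state \in \ostates$,
\[
\valueof{\game}{\state} \;=\; \inf_{\state \transition \state'} \valueof{\game}{\state'}\,.
\]
The inequality ``$\le$'' goes as follows. Fix a successor $\state'$ and $\epsilon > 0$. By weak determinacy, player~$\po$ has an $\epsilon$-optimal minimising strategy from $\state'$. Concatenating the first move $\state \transition \state'$ with this strategy gives a player~$\po$ strategy at $\state$ under which, against any player~$\pz$ response, the reach probability is at most $\valueof{\game}{\state'}+\epsilon$; hence $\valueof{\game}{\state} \le \valueof{\game}{\state'}+\epsilon$, and letting $\epsilon \to 0$ and then taking the infimum over $\state'$ yields the inequality. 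For ``$\ge$'', fix $\epsilon > 0$ and, for each successor $\state'$ of $\state$, fix an $\epsilon$-optimal maximising strategy $\zstrat_{\state'}$ from $\state'$ (again by weak determinacy). Define a player~$\pz$ strategy that, after observing player~$\po$'s first move $\state \transition \state'$, continues with $\zstrat_{\state'}$. Against any player~$\po$ strategy this secures reach probability at least $\valueof{\game}{\state'}-\epsilon \ge \inf_{\state \transition \state'}\valueof{\game}{\state'} - \epsilon$, no matter which $\state'$ is selected; hence $\valueof{\game}{\state} \ge \inf_{\state \transition \state'}\valueof{\game}{\state'} - \epsilon$, and $\epsilon \to 0$ closes the gap.

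With the Bellman equation in hand, finite branching is exactly what is needed: the set $\{\state' \mid \state \transition \state'\}$ is finite for every $\state \in \ostates$, so the infimum is attained by some successor. I would define the MD strategy $\ostrat$ by letting $\ostrat(\state)$ be any such value-realising successor, which gives $\valueof{\game}{\ostrat(\state)} = \valueof{\game}{\state}$ for every $\state \in \ostates$. The only real obstacle is the careful bookkeeping in the ``$\ge$'' direction, where one must splice together one player~$\pz$ continuation strategy per successor of $\state$ into a single well-defined history-dependent strategy; this is routine since the set of first moves is finite. Finite branching is essential in the final step: without it, the infimum at a $\po$-state might fail to be attained, and then no MD selection of a single successor could avoid strictly increasing the value.
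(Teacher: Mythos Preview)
The paper does not prove this lemma; it is quoted as Theorem~3.1 of~\cite{BBKO:IC2011}, so there is no in-paper argument to compare against. Your derivation of the one-step Bellman identity at $\po$-states from weak determinacy is correct, and finite branching indeed turns the infimum into an attained minimum, giving an MD selection~$\ostrat$ with $\valueof{\game}{\ostrat(\state)} = \valueof{\game}{\state}$ at every $\state \in \ostates$, which is exactly the parenthetical claim.

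One caveat: later uses of the lemma in the paper (for instance ``player~$\po$ wins with the MD strategy from Lemma~\ref{lem:optmin}'' in the proof of Theorem~\ref{thm:reachability}, and ``guarantees the value'' in the proof of Lemma~\ref{lem:rvi}) rely on the \emph{global} optimality of~$\ostrat$, namely $\probm_{\game,\state,\zstrat,\ostrat}(\reach{\reachset}) \le \valueof{\game}{\state}$ for every~$\zstrat$ and every starting state~$\state$, not merely the local value-preservation. Your argument stops at the local property. The missing step is standard --- once $\ostrat$ is fixed, the map $\state \mapsto \valueof{\game}{\state}$ is superharmonic in the residual maximiser-MDP (by the Bellman identities at $\pz$-, $\po$-, and random states) and equals~$1$ on~$\reachset$, hence dominates the hitting probability under any~$\zstrat$ via a supermartingale argument --- but since the lemma's parenthetical obscures that this is needed, you may want to add a line covering it.
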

One challenge in proving Theorem~\ref{thm:reachability} is that an optimal minimizing player~$\po$ MD strategy according to Lemma~\ref{lem:optmin} is not necessarily winning for player~$\po$, even for almost-sure reachability and even if player~$\po$ has a winning strategy.
Indeed, consider the game in Figure~\ref{fig:BuchiNOStrongDet}, and add a new player~$\po$ state $u$
and transitions $u \transition s_0$ and $u \transition t$.
For the reachability objective $\reach{\{t\}}$, we then have $\valueof{\game}{u} = \valueof{\game}{s_0} = \valueof{\game}{t} = 1$, and the player~$\po$ MD strategy $\ostrat$ with $\ostrat(u) = t$ is optimal minimizing.
However, $\po$ is not winning from $u$ w.r.t.\ the almost-sure objective $\quantobj{\reach{\{t\}}}{\ge 1}$.
Instead the winning strategy is $\ostrat'$ with $\ostrat'(u) = s_0$.

By the following lemma (from~\cite{BBKO:IC2011}), player~$\pz$ has for every state an $\epsilon$-optimal strategy that needs to be defined only on a finite horizon:
\begin{lemma}\label{lem:reach-approx}{\bf (Lemma 3.2 in \cite{BBKO:IC2011})}
If $\game$ is a finitely branching game with reachability objective $\reach{\reachset}$ then: 
\[
\begin{array}{l}
\forall\, \state \in \states\ \forall\, \epsilon >0\ \exists\, \zstrat \in \zstratset\ \exists\, n \in \N\
\forall\, \ostrat \in \ostratset\,.\\
\probm_{\game,\state,\zstrat,\ostrat}(\reachn{n}{\reachset}) >
\valueof{\game}{\state} - \epsilon\,,
\end{array}
\]
where $\reachn{n}{\reachset}$ denotes the event of reaching~$\reachset$ within at most $n$~steps.
\end{lemma}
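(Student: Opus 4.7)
The plan is to combine weak determinacy (Theorem~\ref{thm:weak_borel_determinacy}) with a compactness argument on the space $\ostratset$ of player~$\po$ strategies, crucially exploiting the finite branching of~$\game$.

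First, given $s$ and $\epsilon>0$, I would fix an $(\epsilon/2)$-optimal player~$\pz$ strategy $\zstrat^*$, i.e., one satisfying $\inf_{\ostrat} \probm_{\game,s,\zstrat^*,\ostrat}(\reach{\reachset}) \ge \valueof{\game}{s} - \epsilon/2$; such a $\zstrat^*$ exists by the definition of the sup in $\valueof{\game}{s}$ together with weak determinacy. It then suffices to show that for $n$ large enough, $\inf_{\ostrat} \probm_{\game,s,\zstrat^*,\ostrat}(\reachn{n}{\reachset}) > \valueof{\game}{s} - \epsilon$, since we can then take $\zstrat := \zstrat^*$. Writing $f_n(\ostrat) := \probm_{\game,s,\zstrat^*,\ostrat}(\reachn{n}{\reachset})$ and $f(\ostrat) := \probm_{\game,s,\zstrat^*,\ostrat}(\reach{\reachset})$, we have $f_n(\ostrat) \nearrow f(\ostrat)$ pointwise because $\reach{\reachset} = \bigcup_n \reachn{n}{\reachset}$.

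The key task is to upgrade this \emph{pointwise} monotone convergence to convergence of the \emph{infima}, $\inf_\ostrat f_n \nearrow \inf_\ostrat f$. Here finite branching is indispensable. I would equip~$\ostratset$ with the product topology, viewing it as $\prod_{w \in \states^*\ostates} \Delta_w$, where $\Delta_w$ is the finite-dimensional simplex of distributions over the finitely many successors of~$w$'s last state. By Tychonoff's theorem this countable product of compact metrizable spaces is compact (and itself metrizable, hence sequentially compact). Moreover, only finitely many partial plays of length $\le n$ emanate from~$s$, so $f_n$ is a polynomial in finitely many coordinates of~$\ostrat$, and therefore continuous on~$\ostratset$.

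I would then argue by contradiction: suppose $a := \lim_n \inf_\ostrat f_n < b := \inf_\ostrat f$. For each~$n$, pick $\ostrat_n$ with $f_n(\ostrat_n) < a + 1/n$, and extract a subsequence $\ostrat_{n_k} \to \ostrat^*$ by sequential compactness. For any fixed~$m$ and all $n_k \ge m$, monotonicity gives $f_m(\ostrat_{n_k}) \le f_{n_k}(\ostrat_{n_k}) < a + 1/n_k$, and continuity of $f_m$ yields $f_m(\ostrat^*) \le a$. Letting $m \to \infty$ gives $f(\ostrat^*) \le a < b$, contradicting $f(\ostrat^*) \ge b$. The main obstacle is precisely this compactness-plus-continuity step; both ingredients rest on finite branching, and once they are in place the lemma follows for $\zstrat := \zstrat^*$ and any sufficiently large~$n$.
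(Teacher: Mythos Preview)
The paper does not prove this lemma at all; it is quoted verbatim as ``Lemma~3.2 in~\cite{BBKO:IC2011}'' and used as a black box. So there is no ``paper's own proof'' to compare against, and your proposal is a self-contained argument where the paper offers none.

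Your argument is correct. Fixing an $(\epsilon/2)$-optimal $\zstrat^*$ and then showing $\inf_\ostrat f_n \to \inf_\ostrat f$ via compactness of $\ostratset$ is exactly the right idea, and finite branching is used in both places where it is needed: each $\Delta_w$ is a finite-dimensional simplex (so $\ostratset$ is a countable product of compact metrizable spaces, hence compact metrizable by Tychonoff), and only finitely many partial plays of length $\le n$ start from~$s$ (so $f_n$ depends on finitely many coordinates and is a polynomial, hence continuous). The subsequence argument is a standard upgrade of pointwise monotone convergence to convergence of infima for continuous functions on a sequentially compact space; note that you could equally invoke that each $f_n$ attains its infimum and then argue directly, but your version is fine. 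One cosmetic point: you only need $\inf_\ostrat f_n > \valueof{\game}{s} - \epsilon$ for some~$n$, which follows immediately from $\lim_n \inf_\ostrat f_n = \inf_\ostrat f \ge \valueof{\game}{s} - \epsilon/2 > \valueof{\game}{s} - \epsilon$; the conclusion $f_n(\ostrat) > \valueof{\game}{s} - \epsilon$ for all $\ostrat$ then follows trivially from the definition of infimum. In short, your proof buys self-containment over the paper's bare citation.
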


Towards a proof of item~(1) of Theorem~\ref{thm:reachability}, we prove the following lemma:
\begin{lemma} \label{lem:reach-opt-uniform}
Let $\game$ be a finitely branching game with reachability objective $\reach{\reachset}$.
Suppose that player~$\pz$ does not have any value-decreasing transitions.
Then there exists a player~$\pz$ MD strategy~$\hat\zstrat$ that is optimal in all states.
That is, for all states~$s$ and for all player~$\po$ strategies~$\ostrat$ we have $\probm_{\game,s,\hat\zstrat,\ostrat}(\reach{\reachset}) \ge \valueof{\game}{s}$.
\end{lemma}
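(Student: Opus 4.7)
The plan is to reduce the game to an MDP via Lemma~\ref{lem:optmin}, exploit the no-value-decreasing hypothesis to make the value function harmonic (resp.\ superharmonic) along plays, and close the argument with a martingale convergence plus a careful memoryless construction.

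Fix an MD optimal-minimizing strategy $\hat\ostrat$ for~$\po$ via Lemma~\ref{lem:optmin}, so $V(\hat\ostrat(s)) = V(s)$ for every $s \in \ostates$, where $V := \valueof{\game}{\cdot}$. At $\pz$-states, the Bellman equation $V(s) = \sup_{s \to s'} V(s')$ combined with the hypothesis $V(s') \ge V(s)$ forces $V(s') = V(s)$ for \emph{every} successor. At $\po$-states, the Bellman equation $V(s) = \inf_{s \to s'} V(s')$ gives $V(s') \ge V(s)$ under any $\ostrat$. At random states, $V$ is preserved in expectation. Consequently, for any MD $\pz$-strategy $\zstrat$ and any $\po$-strategy $\ostrat$, the value process $(V_n)_{n \ge 0}$ in the chain induced by $(\zstrat,\ostrat)$ is a $[0,1]$-bounded submartingale, and against $\hat\ostrat$ it is a martingale.

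Construct $\hat\zstrat$ in two layers. Inside $W := \{s : V(s) = 1\}$ (which is closed under all transitions because of the analysis above), $\pz$ has a memoryless almost-surely winning strategy for $\reach{\reachset}$: the MD witness is obtained by a transfinite-induction/reset construction along the lines of the proof of Theorem~\ref{thm:amost-sure-strong-det}, which becomes memoryless here because all $\pz$-successors inside $W$ are value-preserving and $W$ is closed under all player and random transitions. Outside $W$, enumerate the $\pz$-states of positive value as $s_1, s_2, \ldots$ and, for each $s_i$, use Lemma~\ref{lem:reach-approx} to commit $\hat\zstrat(s_i)$ to a value-preserving successor lying on a finite positive-probability path to $\reachset$; finite branching guarantees that the Bellman supremum is attained and that the greedy memoryless commitment can be made consistent when a state is revisited along a different branch. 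The outcome is an MD strategy $\hat\zstrat$ such that, in the chain induced by $(\hat\zstrat,\ostrat)$ for any $\ostrat$, every state $s \notin W$ with $V(s) > 0$ admits some $n_s \in \N$ and $\eps_s > 0$ with $\probm(\reachn{n_s}{\reachset} \mid \text{currently at } s) \ge \eps_s$.

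By the submartingale convergence theorem, $V_n \to V_\infty$ almost surely and $\mathrm{E}[V_\infty] \ge V(s_0)$. The construction on $W$ forces $\reach{\reachset}$ almost surely once $W$ is entered. Outside $W$, a conditional Borel--Cantelli / L\'evy 0-1 argument, analogous to the one in the proof of Theorem~\ref{thm:amost-sure-strong-det} and using the uniform-enough local reach probabilities from the previous paragraph, shows that plays with $V_n \not\to 0$ must eventually enter $W$. Combining, $V_\infty \in \{0,1\}$ almost surely and $\{V_\infty = 1\}$ coincides almost surely with $\reach{\reachset}$; hence $\probm_{\game,s_0,\hat\zstrat,\ostrat}(\reach{\reachset}) = \mathrm{E}[V_\infty] \ge V(s_0)$. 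Since $s_0$ and $\ostrat$ were arbitrary, $\hat\zstrat$ is optimal in every state. The main obstacle is the memoryless construction: one must combine history-dependent finite-horizon near-optimal strategies from Lemma~\ref{lem:reach-approx} into a single MD commitment with uniform-enough local reach probabilities for the 0-1 argument to go through, while simultaneously producing a memoryless almost-sure winner on~$W$. Finite branching is essential here, both for the attainment of the Bellman supremum and for the existence of finite positive-probability paths through value-preserving successors that the greedy commitment can lock onto.
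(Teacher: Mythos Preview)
The submartingale setup is correct: under the hypothesis, every $\pz$-successor is value-preserving, so $(V_n)$ is a bounded submartingale for any MD $\hat\zstrat$ and any $\ostrat$, and $W=\{s:V(s)=1\}$ is indeed closed under all transitions. But the construction of $\hat\zstrat$ has a genuine gap, and it lies exactly where you flag ``the main obstacle''.

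Inside $W$, your appeal to ``a transfinite-induction/reset construction along the lines of the proof of Theorem~\ref{thm:amost-sure-strong-det}, which becomes memoryless here'' is not justified. In that proof the transfinite pruning removes states of value $<1$; restricted to $W$ it does nothing, and what remains is the reset strategy, which is intrinsically history-dependent (it switches strategies whenever a conditional probability drops below $1/3$). Nothing in ``all $\pz$-successors are value-preserving'' collapses those resets to a positional choice. In fact, producing an MD optimal strategy on $W$ \emph{is} the special case $V\equiv 1$ of the lemma you are proving, so this step is circular.

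Outside $W$ the problem recurs in a different guise. Committing $\hat\zstrat(s_i)$ to ``a value-preserving successor lying on a finite positive-probability path to $\reachset$'' does not by itself yield the ``uniform-enough local reach probabilities'' your 0--1 argument needs: the quantities $n_s$ and $\eps_s$ depend on $s$, and along an infinite play they may degrade so that neither conditional Borel--Cantelli nor L\'evy's law forces $V_\infty\in\{0,1\}$. Concretely, since every $\pz$-successor is value-preserving, any MD choice is ``greedy'', including choices that loop forever among positive-value states without making progress; Lemma~\ref{lem:reach-approx} gives you a finite-horizon witness for each state separately, but you have not shown these witnesses can be glued into a single positional map with a usable lower bound on progress.

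The paper resolves precisely this point by a different mechanism: it fixes $\hat\zstrat$ one finite ``envelope'' ${\it Env}_i$ at a time (the $n_i$-step neighbourhood of $s_i$), taking the optimal MD strategy of that \emph{finite} subgame on its positive-value part, and then proves that the values in the resulting game $\game_{i+1}$ are unchanged. The finiteness of each envelope supplies a uniform $\lambda>0$ reach probability inside it, which is what powers the $(1-\lambda)^\infty=0$ argument; and the no-value-decreasing hypothesis is used only to bound the error of the spliced $\epsilon$-optimal continuations by a geometric series, not to run a martingale limit. Your submartingale framing is a reasonable alternative viewpoint, but to make it a proof you would still need an envelope-style construction (or an equivalent device) to manufacture the uniform local progress that the limit argument requires.
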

\begin{proof}
%We consider the winning objective $\quantobj{\reach{\reachset}}{\ge c}$ for $c = \valueof{\game}{s_0}$.
In order to construct the claimed MD strategy~$\hat{\zstrat}$,
we define a sequence of modified games $\game_i$ in which the strategy
of player~$\pz$ is already fixed on a finite subset of the state space.
We will show that the value of any state remains the same in all the $\game_i$, i.e.,
$\valueof{\game_i}{\state} = \valueof{\game}{\state}$ for all~$\state$.
Fix an enumeration $s_1, s_2, \ldots$ that includes every state in~$\states$ infinitely often.
Let $\game_0 := \game$.

Given~$\game_i$ we construct~$\game_{i+1}$ as follows.
We use Lemma~\ref{lem:reach-approx} to get a strategy $\zstrat_i$ and $n_i \in \N$
s.t.\ $\probm_{\game_i,\state_i,\zstrat_i,\ostrat}(\reachn{n_i}{\reachset}) >
\valueof{\game_i}{s_i} - 2^{-i}$.
From the finiteness of $n_i$ and the assumption that $\game$ is
finitely branching, we obtain that
${\it Env}_i := \{\state\,|\, \state_i \transition^{\le {n_i}} \state\}$
is finite. Consider the subgame~$\game_i'$ with finite state space ${\it Env}_i$.
In this subgame there exists an optimal MD strategy $\zstrat_i'$ that maximizes
the reachability probability for every state in ${\it Env}_i$.
In particular, $\zstrat_i'$ achieves the same approximation in~$\game_i'$ as $\zstrat_i$ 
in~$\game_i$, i.e.,
$\probm_{\game_i',\state_i,\zstrat_i',\ostrat}(\reach{\reachset}) >
\valueof{\game_i}{s_i} - 2^{-i}$.
Let ${\it Env}_i'$ be the subset of states $\state$ in 
${\it Env}_i$ with $\valueof{\game_i'}{\state} >0$.
Since ${\it Env}_i'$ is finite, there exist $n_i' \in \N$ and $\lambda > 0$ with $\probm_{\game_i',\state,\zstrat_i',\ostrat}(\reachn{n_i'}{\reachset}) \ge \lambda$ for all $\state \in {\it Env}_i'$ and all $\ostrat \in \ostratset_{\game_i'}$.

We now construct $\game_{i+1}$ by modifying $\game_i$ as follows.
For every player~$\pz$ state $\state \in {\it Env}_i'$ 
we fix the transition according to 
$\zstrat_i'$, i.e., only transition $\state\transition\zstrat_i'(\state)$
remains and all other transitions from $\state$ are deleted.
Since all moves from $\pz$ states in~${\it Env}_i'$ have been fixed according to~$\zstrat_i'$,
the bounds above for~$\game_i'$ and~$\zstrat_i'$ now hold for~$\game_{i+1}$ and any $\zstrat \in \zstratset_{\game_{i+1}}$.
That is, we have $\probm_{\game_{i+1},\state_i,\zstrat,\ostrat}(\reach{\reachset}) > \valueof{\game_i}{s_i} - 2^{-i}$ and $\probm_{\game_{i+1},\state,\zstrat,\ostrat}(\reachn{n_i'}{\reachset}) \ge \lambda$ for all $s \in {\it Env}_i'$ and all $\zstrat \in \zstratset_{\game_{i+1}}$ and all $\ostrat \in \ostratset_{\game_{i+1}}$.

Now we show that the values of all states $\state$ in 
$\game_{i+1}$ are still the same as in $\game_i$.
Since our games are weakly determined, it suffices to show that player~$\pz$
has an $\epsilon$-optimal strategy from $\state$ in $\game_{i+1}$ 
for every $\epsilon>0$.
Let $\ostrat$ be an arbitrary $\po$ strategy from $\state$ in $\game_{i+1}$.
Let $\state$ be a state and $\zstrat$ be an $\epsilon/2$-optimal $\pz$ strategy from
$\state$ in $\game_i$. We now define a $\pz$ strategy $\zstrat'$ from
$\state$ in $\game_{i+1}$. If the game does not enter ${\it Env}_i'$
then $\zstrat'$ plays exactly as $\zstrat$ (which is possible since outside
${\it Env}_i'$ no transitions have been removed).
If the game enters ${\it Env}_i'$ then it will reach the target 
from within ${\it Env}_i'$ with probability $\ge \lambda$.
Moreover, if the game stays inside ${\it Env}_i'$ forever then it will almost surely reach
the target, since $(1-\lambda)^\infty=0$. Otherwise, it exits ${\it Env}_i'$
at some state $\state' \notin {\it Env}_i'$ (strictly speaking, at a
distribution of such states). If this was the $k$-th visit to 
${\it Env}_i'$ then, from $\state'$, $\zstrat'$ plays
an $\epsilon\big/2^{k+1}$-optimal strategy w.r.t.\ $\game_i$
(with the same modification as above if it visits ${\it Env}_i'$ again).
We can now bound the error of $\zstrat'$ from $\state$ as follows.
The set of plays which visit ${\it Env}_i'$ infinitely often
contribute no error, since they almost surely reach the target by
$(1-\lambda)^\infty=0$.
Since all transitions are at least value-preserving in~$\game$ and hence in~$\game_i$,
the error of the plays which visit ${\it Env}_i'$ at most $j$
times is bounded by $\sum_{k=1}^j \epsilon \big/ 2^{k}$.
Therefore, the error of $\zstrat'$ from $\state$ in $\game_{i+1}$ is bounded
by $\epsilon$ and thus 
$\valueof{\game_{i+1}}{\state} = \valueof{\game_i}{\state}$.

Finally, we can construct the player~$\pz$ MD winning strategy $\hat{\zstrat}$
as the limit of the MD strategies $\zstrat_i'$, which are all compatible with
each other by the construction of the games $\game_i$.
We obtain
$\probm_{\game,\state_i,\hat{\zstrat},\ostrat}(\reach{\reachset})  >
\valueof{\game}{s_i} - 2^{-i}$ for all $i \in \N$.
Let $s \in \states$.
Since $s=s_i$ holds for infinitely many~$i$, we conclude
Thus $\probm_{\game,\state,\hat{\zstrat},\ostrat}(\reach{\reachset}) \ge
\valueof{\game}{s}$ as required. 
\end{proof}

Towards a proof of items (2) and~(3) of Theorem~\ref{thm:reachability}, we consider the operation~$\rvi(\game)$, defined before the statement of Theorem~\ref{thm:reachability}.
The following lemma shows that in reachability games
all value-increasing transitions of player~$\po$ can be removed without 
changing the value of any state (although the outcome of the threshold reachability game may change in general).

%\begin{definition}\label{def:rvi}
%Let $\game$ be a reachability game with set of transitions $\transition$.
%Then 
%$\ztransition{\game} := \{(\state \transition \state') \,|\, \state \in \zstates\}$
%and 
%$\otransition{\game} := \{(\state \transition \state') \,|\, \state \in
%\ostates\}$
%are the transitions controlled by player~$\pz$ and $\po$, respectively.
%Let ${\it VI}(\game) := \{(\state \transition \state') \,|\, \state \in
%\ostates\ \wedge\ \valueof{\game}{\state} < \valueof{\game}{\state'}\}$
%the set of value-increasing $\po$ transitions.
%Let $\game' := \rvi(\game)$ be the subgame with set of transitions
%$\transition' := \transition \setminus {\it VI}(\game)$, i.e., with the
%value-increasing $\po$ transitions removed.
%\end{definition}

\begin{lemma}\label{lem:rvi}
Let $\game$ be a finitely branching reachability game and $\game' := \rvi(\game)$.
Then for all $\state \in \states$ we have $\valueof{\game'}{\state} =
\valueof{\game}{\state}$.
Thus $\rvi(\game')=\game'$.
\end{lemma}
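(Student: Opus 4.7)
The plan is to establish the two inequalities $\valueof{\game'}{\state} \ge \valueof{\game}{\state}$ and $\valueof{\game'}{\state} \le \valueof{\game}{\state}$ separately, exploiting the asymmetry in what was removed: player~$\po$ loses some transitions, while player~$\pz$ and the random states keep all of theirs.

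For $\valueof{\game'}{\state} \ge \valueof{\game}{\state}$, I would observe that $\zstratset_{\game'} = \zstratset_{\game}$, while $\ostratset_{\game'}$ embeds naturally into $\ostratset_{\game}$, because any distribution over the allowed successors in $\game'$ is still a distribution over the successors in $\game$. Moreover, for such an $\ostrat \in \ostratset_{\game'}$ and any $\zstrat$, the induced Markov chain is identical in $\game$ and $\game'$, so the probability measures on $\reach{\reachset}$ agree. Since player~$\po$ now ranges over a smaller set in $\game'$, taking $\sup_\zstrat \inf_\ostrat$ can only increase.

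The opposite direction $\valueof{\game'}{\state} \le \valueof{\game}{\state}$ is the main point of the lemma, and I would use Lemma~\ref{lem:optmin} here. Apply it to $\game$ to get an MD strategy $\ostrat$ that is optimal minimizing in every $\po$-state, i.e., $\valueof{\game}{\ostrat(\state)} = \valueof{\game}{\state}$ for $\state \in \ostates$. Recall that transitions from $\po$-states are never value-decreasing, so $\ostrat$ actually chooses a value-preserving transition, hence never a value-increasing one. Consequently $\ostrat$ is a valid strategy in $\game'$. For any $\zstrat$, the induced probability under $(\zstrat,\ostrat)$ is the same in both games, and optimality of $\ostrat$ in $\game$ gives $\sup_\zstrat \probm_{\game,\state,\zstrat,\ostrat}(\reach{\reachset}) = \valueof{\game}{\state}$. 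Hence $\valueof{\game'}{\state} \le \valueof{\game}{\state}$. The only subtle point worth double-checking is that $\rvi$ applied to $\game$ produces a well-defined finitely branching game with no new dead ends, but this was already noted in the paragraph preceding the lemma (at least one value-preserving successor always survives for every $\po$-state).

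Finally, the closing claim $\rvi(\game') = \game'$ follows for free: since values are preserved, a transition from a $\po$-state is value-increasing in $\game'$ iff it is value-increasing in $\game$; but the construction of $\game'$ removed exactly those, so $\game'$ has no remaining value-increasing $\po$-transitions and a second application of $\rvi$ changes nothing.
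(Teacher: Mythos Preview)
Your proposal is correct and matches the paper's proof essentially line for line: the paper also observes that removing only $\po$-transitions trivially gives $\valueof{\game'}{\state} \ge \valueof{\game}{\state}$, and then invokes Lemma~\ref{lem:optmin} to argue that the optimal minimizing MD strategy never uses a value-increasing transition and hence remains available in~$\game'$, yielding the reverse inequality. Your additional remarks (the well-definedness check and the explicit derivation of $\rvi(\game')=\game'$) are sound elaborations of points the paper leaves implicit.
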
 
\begin{proof}
Since only $\po$ transitions are removed, we trivially have 
$\valueof{\game'}{\state} \ge \valueof{\game}{\state}$.
For the other inequality observe that the optimal minimizing strategy
of Lemma~\ref{lem:optmin} never takes any value-increasing transition
and thus also guarantees the value in $\game'$.
Thus also $\valueof{\game'}{\state} \le \valueof{\game}{\state}$.
%Since all states have the same value in $\game$ as in $\game'$, it follows
%that ${\it VI}(\game') = \emptyset$ and thus $\rvi(\game')=\game'$. 
\end{proof}
Lemma~\ref{lem:rvi} is in sharp contrast to Example~\ref{ex:exampleLimitOrdinal} on page~\pageref{ex:exampleLimitOrdinal}, which showed that the removal of value-\emph{decreasing} transitions can change the value of states and can cause further transitions to become value-decreasing.

Similar to the proof of Theorem~\ref{thm:amost-sure-strong-det}, the proof of the following lemma considers a transfinite sequence of subgames, where each subgame is obtained by removing the value-decreasing transitions from the previous subgames.
\begin{lemma} \label{lem:pre-reach-unreach-geq-MD}
Let $\game$ be a finitely branching game with reachability objective $\reach{\reachset}$.
Then there exist a player~$\pz$ MD strategy~$\hat\zstrat$ and a player~$\po$ MD strategy~$\hat\ostrat$ such that for all states $s \in \states$, if $\game = \rvi(\game)$ or $\valueof{\game}{s} = 1$, then the following is true:
\begin{equation*}
\begin{aligned}
\forall\, \ostrat \in \ostratset_\game: & \ \probm_{\game,s,\hat\zstrat,\ostrat}(\reach{\reachset}) \ge \valueof{\game}{s}  \quad \text{ or} \\
\forall\, \zstrat \in \zstratset_\game: & \ \probm_{\game,s,\zstrat,\hat\ostrat}(\reach{\reachset}) < \valueof{\game}{s}.
\end{aligned}
%\label{eq:pre-reach-unreach-geq-MD}
\end{equation*}
%Further, for $s \in \states$ with $\valueof{\game}{s} = 1$, Equation~\eqref{eq:pre-reach-unreach-geq-MD} holds even without the assumption $\game = \rvi(\game)$.
\end{lemma}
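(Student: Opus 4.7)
The plan is to combine Lemmas~\ref{lem:optmin} and~\ref{lem:reach-opt-uniform} with a transfinite subgame construction in the spirit of Theorem~\ref{thm:amost-sure-strong-det}, though the removals here will be of transitions rather than states.

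First, I apply Lemma~\ref{lem:optmin} to obtain an MD pointwise optimal minimizing player-$\po$ strategy~$\hat\ostrat$, and denote by~$\game^*$ the MDP obtained from~$\game$ by restricting $\po$ to~$\hat\ostrat$. I claim that $\valueof{\game^*}{s} = \valueof{\game}{s}$ at every state~$s$ of interest: when $\game = \rvi(\game)$, every $\po$ transition is value-preserving (not value-increasing by hypothesis and not value-decreasing by $\po$ ownership), so $\hat\ostrat$ is globally optimal; when $\valueof{\game}{s} = 1$, the trivial bound $\valueof{\game^*}{s} \le 1 = \valueof{\game}{s}$ combined with the general inequality $\valueof{\game^*}{s} \ge \valueof{\game}{s}$ forces equality.

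Second, I construct~$\hat\zstrat$ by transfinite iteration on subgames of~$\game$: set $\game_0 := \game$, obtain $\game_{\alpha+1}$ from $\game_\alpha$ by deleting every value-decreasing player-$\pz$ transition in~$\game_\alpha$ (w.r.t.~$\valueof{\game_\alpha}{}$), and take intersections of transition sets at limit ordinals. Countability of~$\states$ makes the sequence stabilize at some countable ordinal~$\beta$. In~$\game_\beta$, player~$\pz$ has no value-decreasing transitions, so Lemma~\ref{lem:reach-opt-uniform} furnishes an MD strategy~$\hat\zstrat$ that is optimal in~$\game_\beta$, i.e.\ $\probm_{\game_\beta, s, \hat\zstrat, \ostrat}(\reach{\reachset}) \ge \valueof{\game_\beta}{s}$ for all~$s$ and~$\ostrat$. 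Since only $\pz$-transitions are ever deleted, $\hat\zstrat$'s choices are valid transitions in~$\game$ and $\po$ has the same options in both, so the bound transfers: $\probm_{\game, s, \hat\zstrat, \ostrat}(\reach{\reachset}) \ge \valueof{\game_\beta}{s}$.

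Third, I verify the dichotomy by cases on $\valueof{\game_\beta}{s}$. If $\valueof{\game_\beta}{s} = \valueof{\game}{s}$, then $\hat\zstrat$ witnesses the first alternative. Otherwise $\valueof{\game_\beta}{s} < \valueof{\game}{s}$, and I claim $\hat\ostrat$ strictly undercuts: $\probm_{\game, s, \zstrat, \hat\ostrat}(\reach{\reachset}) < \valueof{\game}{s}$ for every~$\zstrat$. Using $\probm_{\game, s, \zstrat, \hat\ostrat} = \probm_{\game^*, s, \zstrat}$ and $\valueof{\game^*}{s} = \valueof{\game}{s}$, this reduces to showing that no $\pz$ strategy in~$\game^*$ attains $\valueof{\game^*}{s}$ from~$s$. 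Arguing contrapositively: if some $\pz$ strategy attained the value, then MD-sufficiency for finitely branching countable MDPs with reachability would produce an MD optimal~$\zstrat'$, and such a~$\zstrat'$ must use only value-preserving transitions (otherwise the value would drop at one of its reachable states), so~$\zstrat'$ would survive the transfinite iteration that produces~$\game_\beta$, forcing $\valueof{\game_\beta}{s} \ge \valueof{\game}{s}$---a contradiction.

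The main obstacle is the contrapositive step: besides invoking MD-sufficiency for countable finitely branching MDPs with reachability when optimal strategies exist (cf.~\cite{Puterman:book}, Section~7.2.7), one must check inductively that the MD optimal~$\zstrat'$ really survives every stage of the iteration, i.e.\ that none of its $\valueof{\game}{}$-preserving transitions becomes value-decreasing in some~$\game_\alpha$ as other transitions are removed at later stages. This amounts to tracking how the removal of value-decreasing $\pz$-transitions can propagate to cascade value drops, and is the analogue of the value-bookkeeping that drives the transfinite argument in Theorem~\ref{thm:amost-sure-strong-det}.
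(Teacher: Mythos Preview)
There is a genuine gap: taking $\hat\ostrat$ directly from Lemma~\ref{lem:optmin} is precisely what the paper warns against in the paragraph following that lemma. Consider that example: the upper half of Figure~\ref{fig:BuchiNOStrongDet} together with a fresh player~$\po$ state~$u$ and transitions $u\transition s_0$, $u\transition t$, with reachability target~$\{t\}$. Here $\game=\rvi(\game)$ (the only $\po$ state is~$u$, and both its successors have value~$1$), so your dichotomy must hold at~$u$. Since $\valueof{\game}{u}=\valueof{\game}{s_0}=\valueof{\game}{t}=1$, Lemma~\ref{lem:optmin} may return $\hat\ostrat(u)=t$. Then in~$\game^*$ the play from~$u$ goes straight to the target, so player~$\pz$ trivially has an MD optimal strategy~$\zstrat'$ in~$\game^*$ from~$u$, and all of its (vacuously many) $\pz$-transitions survive your iteration. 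Your contrapositive would conclude $\valueof{\game_\beta}{u}=1$. But in fact your iteration removes the value-decreasing transitions $s_i\transition r_i$, whence $\valueof{\game_1}{s_0}=0$ and $\valueof{\game_1}{u}=\min(\valueof{\game_1}{s_0},\valueof{\game_1}{t})=0$ (player~$\po$ is unrestricted in~$\game_1$ and may go to~$s_0$), so $\valueof{\game_\beta}{u}=0$. Neither alternative of the dichotomy is witnessed at~$u$ by this~$\hat\ostrat$.

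The error is not the ``main obstacle'' you identify (survival of~$\zstrat'$ through the iteration) but the step after it: from ``$\zstrat'$ survives'' you infer $\valueof{\game_\beta}{s}\ge\valueof{\game}{s}$. Survival of~$\zstrat'$ only guarantees that~$\zstrat'$ achieves~$\valueof{\game}{s}$ in~$\game_\beta$ \emph{against the fixed~$\hat\ostrat$}; but $\valueof{\game_\beta}{s}$ is an infimum over all player~$\po$ strategies, and in~$\game_\beta$ player~$\po$ may have strictly better responses than~$\hat\ostrat$ (going to~$s_0$ rather than~$t$ above). The paper's remedy is not to patch the contrapositive but to redefine~$\hat\ostrat$ using the transfinite index: at a state with index~$\alpha$, choose a successor that is value-preserving \emph{in~$\game_\alpha$} and itself has index~$\alpha$, and then argue directly that this index-aware~$\hat\ostrat$ forces a strict shortfall from every state of ordinal index. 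Your~$\hat\zstrat$ construction and its transfer from~$\game_\beta$ to~$\game$ are essentially the paper's; only the~$\hat\ostrat$ side needs the index-aware definition.
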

\begin{proof}
We construct a transfinite sequence of subgames $\game_\alpha$, 
where $\alpha \in \ord$ is an ordinal number, by stepwise removing
certain transitions.
Let $\transition_\alpha$ denote the set of transitions of the subgame~$\game_\alpha$.

First, let $\game_0 := \rvi(\game)$. Since $\game$ is assumed to have no dead
ends, it follows from the definition of $\rvi$ that $\game_0$ does not contain
any dead ends either.
In the following, we only remove transitions of player~$\pz$. The resulting
games $\game_\alpha$ with $\alpha >0$ may contain dead ends, but these are
always considered to be losing for player~$\pz$. (Formally, one might add a
dummy loop at these states.)
For each $\alpha \in \ord$ we define a set~$D_\alpha$ as the set of transitions that are controlled by player~$\pz$ and that are value-decreasing in~$\game_\alpha$.
For any $\alpha \in \ord \setminus \{0\}$ we define 
$\transition_\alpha := \transition \setminus \bigcup_{\gamma < \alpha} D_\gamma$.

Since the sequence of sets $\transition_\alpha$
is non-increasing and 
we assumed that our game $\game$ has only countably many states and
transitions, it follows that this sequence of games $\game_\alpha$
converges at some ordinal $\beta$ where $\beta \le \omega_1$ (the first
uncountable ordinal). I.e., we have $\game_\beta = \game_{\beta +1}$.
In particular there are no value-decreasing player~$\pz$ transitions
in $\game_\beta$, i.e., $D_{\beta} = \emptyset$.
 
The removal of transitions of player~$\pz$ can only decrease the value 
of states, and the operation $\rvi$ is value preserving by Lemma~\ref{lem:rvi}. 
Thus $\valueof{\game_\beta}{\state} \le \valueof{\game_\alpha}{\state} \le \valueof{\game}{\state}$ for all $\alpha \in \ord$.
We define the \emph{index} of a state $\state$ by 
$I(\state) := \min\{\alpha\in\ord \,|\, \valueof{\game_\alpha}{\state} <
\valueof{\game}{\state}\}$, 
and as $\undef$ if the set is empty.

\smallskip{\noindent\bf Strategy~$\hat\zstrat$:}
Since $\game_\beta$ does not have value-decreasing transitions, we can invoke Lemma~\ref{lem:reach-opt-uniform} to obtain a player~$\pz$ MD strategy~$\hat{\zstrat}$ with $\probm_{\game_\beta,s,\hat{\zstrat},\ostrat}(\reach{\reachset}) \ge \valueof{\game_\beta}{s} = \valueof{\game}{s}$ for all~$\pi$ and for all~$s$ with $I(s) = \undef$.
We show that, if $I(s) = \undef$ and either $\valueof{\game}{s} = 1$ or $\game = \rvi(\game)$, then also in~$\game$ we have $\probm_{\game,s,\hat{\zstrat},\ostrat}(\reach{\reachset}) \ge \valueof{\game}{s}$.
The only potential difference in the game on $\game$ is that
$\ostrat$ could take a $\po$ transition, say $s' \transition s''$, that is
present in~$\game$ but not in $\game_\beta$.
Since all $\po$ transitions of $\game_0$ are kept in $\game_\beta$,
such a transition would have been removed in the step 
$\game_0 := \rvi(\game)$.
We show that this is impossible.

For the first case suppose that $s$ satisfies $I(s) = \undef$ and $\valueof{\game}{s} = 1$.
It follows $\valueof{\game_\beta}{s} = 1$.
Since $\game_\beta$ does not have value-decreasing transitions, we have $\valueof{\game_\beta}{s'} = \valueof{\game_\beta}{s''} = 1$, hence $\valueof{\game}{s'} = \valueof{\game}{s''} = 1$, so the transition $s' \transition s''$ is not value-increasing in~$\game$.
Hence the transition is present in~$\game_0$, hence also in~$\game_\beta$.

For the second case suppose $\game = \rvi(\game)$.
Since $\game$ does not contain any value-increasing transitions, the transition $s' \transition s''$ is not value-increasing in~$\game$.
So it is present in~$\game_0$, and thus also in~$\game_\beta$.

It follows that under $\hat{\zstrat}$ the play remains in the states of
$\game_\beta$ and only uses transitions that are present in $\game_\beta$, 
regardless of the strategy $\ostrat$.
In this sense, all plays under $\hat{\zstrat}$ on~$\game$ coincide with plays on~$\game_\beta$.
Hence
$\probm_{\game,s,\hat{\zstrat},\ostrat}(\reach{\reachset}) =
\probm_{\game_\beta,s,\hat{\zstrat},\ostrat}(\reach{\reachset}) \ge \valueof{\game}{s}$.

\smallskip{\noindent\bf Strategy~$\hat\ostrat$:}
It now suffices to define a player~$\po$ MD strategy~$\hat\ostrat$ so that we have $\probm_{\game,s,\zstrat,\hat\ostrat}(\reach{\reachset}) < \valueof{\game}{s}$ for all~$\zstrat$ and for all $s$ with $I(s) \in \ord$.
This strategy~$\hat\ostrat$ is defined as follows.
\begin{itemize}
\item
If $I(\state)=\alpha$ then $\hat{\ostrat}(\state)=\state'$ where $\state'$ 
is an arbitrary but fixed successor of $\state$ 
where transition $\state\transition\state'$ is present in~$\game_\alpha$ and
$\valueof{\game_\alpha}{\state} = \valueof{\game_\alpha}{\state'}$
and $I(\state')=I(\state)=\alpha$.
This exists by the assumption that $\game$ is finitely branching
and the definition of~$\game_\alpha$.
In particular, since the transition $\state\transition\state'$ 
is present in~$\game_\alpha$, it is not value-increasing in the game $\game$;
otherwise it would have been removed in the step from $\game$ to~$\game_0$.
\item
If $I(\state)=\undef$,
$\hat{\ostrat}$ plays the optimal minimizing MD strategy on $\game$ from
Lemma~\ref{lem:optmin}, i.e., we have
$\hat{\ostrat}(\state)=\state'$ where $\state'$ 
is an arbitrary but fixed successor of $\state$ in $\game$ with
$\valueof{\game}{\state} = %\valueof{\game_0}{\state} = \valueof{\game_0}{\state'} = 
\valueof{\game}{\state'}$.
\end{itemize}
Considering both cases, it follows that strategy~$\hat{\ostrat}$
is optimal minimizing in~$\game$.

Let $s_0$ be an arbitrary state with $I(s_0) \in \ord$.
To show that $\probm_{\game,s_0,\zstrat,\hat{\ostrat}}(\reach{\reachset}) < \valueof{\game}{s_0}$ holds for all~$\zstrat$, let $\zstrat$ be any strategy of player~$\pz$.
Let $\alpha \neq \undef$ be the smallest index among the states that can be reached with positive probability from~$\state_0$ under the strategies $\zstrat,\hat{\ostrat}$.
Let $\state_1$ be such a state with index~$\alpha$.
In the following we write $\zstrat$ also for the strategy~$\zstrat$ after a partial play leading from~$\state_0$ to~$\state_1$ has been played.

Suppose that the play from~$\state_1$ under the strategies $\zstrat,\hat{\ostrat}$ 
always remains in~$\game_\alpha$.
Strategy~$\hat{\ostrat}$ might not be optimal minimizing in~$\game_\alpha$ in general.
However, we show that it is optimal minimizing in $\game_\alpha$ from all states with index
$\ge \alpha$.
Let $\state$ be a $\po$ state with index $I(\state) = \alpha' \ge \alpha$.
By definition of $\hat\ostrat$ we have
$\hat\ostrat(\state)=\state'$ where the transition $\state \transition \state'$ is present in $\game_{\alpha'}$
with $\valueof{\game_{\alpha'}}{\state} = \valueof{\game_{\alpha'}}{\state'}$
and $I(\state') = I(\state) =\alpha'$.
In the case where $\alpha' = \alpha$ this directly implies that the step
$\state \transition \state'$ is optimal minimizing in $\game_\alpha$.
The remaining case is that $\alpha' > \alpha$.
Here, by definition of the index,
$\valueof{\game}{\state} = \valueof{\game_\alpha}{\state}$ and
$\valueof{\game}{\state'} = \valueof{\game_\alpha}{\state'}$.
Since the transition $\state \transition \state'$ is present in $\game_{\alpha'}$,
it is also present in $\game_0$ and $\game_\alpha$.
Since $\game_0 = \rvi(\game)$, this transition is not value-increasing in $\game$.
Also, it is not value-decreasing in $\game$, because it is
a $\po$ transition.
Therefore $\valueof{\game}{\state} = \valueof{\game}{\state'}$,
and thus 
$\valueof{\game_\alpha}{\state} = \valueof{\game_\alpha}{\state'}$.
Also in this case the step
$\state \transition \state'$ is optimal minimizing in $\game_\alpha$.

So the only possible exceptions where strategy~$\hat{\ostrat}$ might 
not be optimal minimizing in~$\game_\alpha$ are states with index $<\alpha$.
Since we have assumed above that such states cannot be reached under $\zstrat, \hat{\ostrat}$,
it follows that $\probm_{\game,\state_1,\zstrat,\hat{\ostrat}}(\reach{\reachset})
\le \valueof{\game_\alpha}{\state_1} < \valueof{\game}{\state_1}$.

Now suppose that the play from~$\state_1$ under $\zstrat,\hat{\ostrat}$, 
with positive probability, takes a transition, say $s_2 \transition s_3$, that is not present in~$\game_\alpha$.
Then this transition was value-decreasing for some game~$\game_{\alpha'}$ with ${\alpha'}<\alpha$: that is, $\valueof{\game_{\alpha'}}{\state_2} > \valueof{\game_{\alpha'}}{\state_3}$.
Since the indices of both $\state_2$ and~$\state_3$ are $\ge \alpha >{\alpha'}$, we have
$\valueof{\game}{\state_2} = \valueof{\game_{\alpha'}}{\state_2} > \valueof{\game_{\alpha'}}{\state_3} = \valueof{\game}{\state_3}$.
Hence the transition $\state_2 \transition \state_3$ is value-decreasing in~$\game$.
Since $\hat{\ostrat}$ is optimal minimizing in~$\game$, we also have
$\probm_{\game,\state_1,\zstrat,\hat{\ostrat}}(\reach{\reachset}) < \valueof{\game}{\state_1}$.

Since $\hat{\ostrat}$ is optimal minimizing in~$\game$, we conclude that we have $\probm_{\game,\state_0,\zstrat,\hat{\ostrat}}(\reach{\reachset}) < \valueof{\game}{\state_0}$.
\end{proof}

We are now ready to prove Theorem~\ref{thm:reachability}.
\begin{proof}[Proof of Theorem~\ref{thm:reachability}]
Let $\game$ be a finitely branching game with reachability objective $\quantobj{\reach{\reachset}}{\constraint\const}$.
Let $s_0 \in \states$ be an arbitrary initial state.

Suppose $\valueof{\game}{s_0} < c$.
Then player~$\po$ wins with the MD strategy from Lemma~\ref{lem:optmin}.

Suppose $\valueof{\game}{s_0} > c$.
Let $\delta := \valueof{\game}{\state_0} - \const > 0$.
By Lemma~\ref{lem:reach-approx} there are a strategy $\zstrat \in \zstratset$ and $n \in \N$ such that $\probm_{\game,\state_0,\zstrat,\ostrat}(\reachn{n}{\reachset}) > \valueof{\game}{\state_0} - \frac{\delta}{2} > \const$ holds for all $\ostrat \in \ostratset$.
The strategy $\zstrat$ plays on the subgame~$\game'$ with state space $\states' = \{s'\in \states \mid s \transition^{\le n} s'\}$, which is finite since $\game$ is finitely branching.
Therefore, there exists an MD strategy~$\zstrat'$ with $\probm_{\game',\state_0,\zstrat',\ostrat}(\reach{\reachset}) \ge \probm_{\game,\state_0,\zstrat,\ostrat}(\reachn{n}{\reachset})$.
Since $\states' \subseteq \states$, the strategy~$\zstrat'$ also applies in~$\game$, hence $\probm_{\game,\state_0,\zstrat',\ostrat}(\reach{\reachset}) \ge \probm_{\game',\state_0,\zstrat',\ostrat}(\reach{\reachset})$.
By combining the mentioned inequalities we obtain that $\probm_{\game,\state_0,\zstrat',\ostrat}(\reach{\reachset}) > c$ holds for all $\ostrat \in \ostratset$.
So the MD strategy~$\zstrat'$ is winning for player~$\pz$.

It remains to consider the case $\valueof{\game}{s_0} = c$.
Let us discuss the four cases from the statement of Theorem~\ref{thm:reachability} individually.
\begin{enumerate}
\item[(4)] If $\mathord{\constraint} = \mathord{>}$ then player~$\po$ wins with the MD strategy from Lemma~\ref{lem:optmin}.
\end{enumerate}
So for the remaining cases it suffices to consider the threshold objective $\quantobj{\reach{\reachset}}{\ge \valueof{\game}{s_0}}$.
\begin{enumerate}
\item[(1)] If player~$\pz$ does not have value-decreasing transitions then player~$\pz$ wins with the MD strategy from Lemma~\ref{lem:reach-opt-uniform}.
\item[(2)] If player~$\po$ does not have value-increasing transitions then Lemma~\ref{lem:pre-reach-unreach-geq-MD} supplies either player~$\pz$ or player~$\po$ with an MD winning strategy.
\item[(3)] If $\const = \valueof{\game}{s_0} = 1$ then, again, Lemma~\ref{lem:pre-reach-unreach-geq-MD} supplies either player~$\pz$ or player~$\po$ with an MD winning strategy.
\end{enumerate}
This completes the proof of Theorem~\ref{thm:reachability}.
\end{proof}

\subsection{B\"uchi and co-B\"uchi Objectives}\label{sec:buchi}

Let $\formula$ be the B\"uchi objective.
(The co-B\"uchi objective is dual.)
Quantitative B\"uchi objectives $\quantobj{\formula}{\constraint\const}$
with $\const \in (0,1)$ are not strongly determined, not even for finitely
branching games (Theorem~\ref{thm:Buchi-no-strong-det}),
but positive probability $\quantobj{\formula}{>0}$ and almost-sure 
$\quantobj{\formula}{\ge 1}$ B\"uchi objectives are strongly determined
(Theorem~\ref{thm:amost-sure-strong-det}).

However, $\quantobj{\formula}{>0}$ objectives are not strongly FR-determined,
even in finitely branching systems.
Even in the special case of finitely branching MDPs (where player~$\po$ is passive
and the game is trivially strongly determined), player~$\pz$ may require
infinite memory to win \cite{Krcal:Thesis:2009}.

In infinitely branching games, the almost-sure B\"uchi objective 
$\quantobj{\formula}{\ge 1}$ is not strongly FR-determined, because it
subsumes the almost-sure reachability objective;
cf.\ Subsection~\ref{sec:reachability}.

In contrast, in finitely branching games, the almost-sure B\"uchi objective 
$\quantobj{\formula}{\ge 1}$ is strongly MD-determined, as the following theorem shows:

\begin{theorem} \label{thm:Buchi}
Let $\game$ be a finitely branching game with objective $\buchi{\reachset}$.
Then there exist a player~$\pz$ MD strategy~$\hat\zstrat$ and a player~$\po$ MD strategy~$\hat\ostrat$ such that for all states $s \in \states$:
\begin{equation*}
\begin{aligned}
\forall\, \ostrat \in \ostratset_\game: & \ \probm_{\game,s,\hat\zstrat,\ostrat}(\buchi{\reachset}) = 1  \quad \text{ or} \\
\forall\, \zstrat \in \zstratset_\game: & \ \probm_{\game,s,\zstrat,\hat\ostrat}(\buchi{\reachset}) < 1.
\end{aligned}
\end{equation*}
Hence finitely branching almost-sure B\"uchi games are strongly MD-determined.
\end{theorem}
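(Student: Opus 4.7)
By Theorem~\ref{thm:amost-sure-strong-det} the state space partitions as $\states = W \uplus T$ into player~$\pz$'s and player~$\po$'s almost-sure B\"uchi winning sets; it remains to upgrade the (possibly complex) winning strategies given by strong determinacy to single MD strategies $\hat\zstrat$ and $\hat\ostrat$. I handle the two sides separately, reducing each to the MD reachability results of Section~\ref{sec:reachability}.

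\textbf{Player~$\pz$ side.} First I would show that $W$ is \emph{closed} under non-$\pz$ transitions: every $\po$- or random-state in $W$ must have all its successors in $W$ (otherwise $\po$ or chance could push the play into $T$ with positive probability, contradicting almost-sure winning from~$W$), and every $\pz$-state in $W$ has at least one successor in $W$. The subgame $\game|_W$ on state space~$W$ is thus well-defined, and in it $\pz$ wins almost-sure B\"uchi from every state, hence in particular wins the almost-sure reachability objective $\quantobj{\reach{\reachset \cap W}}{\ge 1}$. Theorem~\ref{thm:reachability}(3) supplies an MD strategy $\hat\zstrat_W$ for this reachability objective in $\game|_W$, which I extend arbitrarily on $T$ to obtain a global MD strategy $\hat\zstrat$. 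Because $W$ is closed, a play from $s \in W$ under $\hat\zstrat$ never leaves~$W$ and therefore mirrors a play in $\game|_W$; the strong Markov property, exploiting the memorylessness of $\hat\zstrat$, then iterates the almost-sure reachability guarantee at each hit of $\reachset \cap W$, so $\reachset$ is visited infinitely often almost surely.

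\textbf{Player~$\po$ side.} This direction is harder. In finite games the almost-sure B\"uchi losing set $T$ coincides with the complement $U$ of $\pz$'s almost-sure reachability winning set, and a single invocation of Theorem~\ref{thm:reachability}(3) suffices since $\buchi{\reachset} \subseteq \reach{\reachset}$. In countable games this equality can fail: a state in $T \setminus U$ admits almost-sure reach of $\reachset$ but not almost-sure B\"uchi, because after the first hit $\po$ may push the play into~$U$. I therefore propose a transfinite iteration in the spirit of Lemma~\ref{lem:pre-reach-unreach-geq-MD}: define a non-decreasing transfinite sequence $U = B_0 \subseteq B_1 \subseteq \cdots$ of subsets of~$T$, together with MD strategies $\hat\ostrat_\alpha$ on~$B_\alpha$ guaranteeing $\probm[\buchi{\reachset}] < 1$. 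At successor stages I enlarge $B_\alpha$ via Theorem~\ref{thm:reachability}(4) applied to $\quantobj{\reach{B_\alpha}}{>0}$, which supplies an MD $\po$-strategy positively reaching $B_\alpha$ from the newly added states; patching this with $\hat\ostrat_\alpha$ (the supports are disjoint) yields $\hat\ostrat_{\alpha+1}$. At limit ordinals take unions. The sequence stabilizes at some countable ordinal, and a complementary argument mirroring the $\pz$-side construction shows that the fixpoint equals~$T$: if some $s \in T$ lay outside the fixpoint, the subgame obtained by removing the fixpoint would satisfy the closure property used in the $\pz$ side, producing an MD almost-sure B\"uchi strategy from $s$ and contradicting $s \in T$.

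\textbf{Main obstacle.} The $\pz$ side reduces cleanly to Theorem~\ref{thm:reachability}(3) combined with the strong Markov property, exploiting closure of $W$. The crux of the proof lies in the $\po$ side: one must both circumvent the failure of the identity $T = U$ in countable games, and verify that the transfinite extension terminates precisely at~$T$ while producing MD strategies that can be composed into a single globally defined MD strategy.
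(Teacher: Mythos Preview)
Your player-$\pz$ argument is correct and parallels the paper: both restrict to the almost-sure winning region, observe it is closed under non-$\pz$ transitions, extract an MD almost-sure reachability strategy there (the paper via Lemma~\ref{reachp-opt-max} for $\reachp{\reachset}$, you via Theorem~\ref{thm:reachability}(3)), and iterate by memorylessness.

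Your player-$\po$ iteration does not work as stated. Enlarging $B_\alpha$ to the set where $\po$ can force $\reach{B_\alpha}$ with positive probability is just taking the $\po$-attractor, which is idempotent, so your sequence stabilizes at $B_1$ and this need not equal $T$. Concretely: let $\po$ be passive, take $\pz$-states $s_0, s_1, \ldots$ with $s_i \transition s_{i+1}$ and $s_i \transition t_i$, where each $t_i \in \reachset$ has a unique transition to a non-$\reachset$ sink $u_i$. Then $U$ consists of the $u_i$, its attractor adds only the $t_i$, yet every $s_i$ lies in $T$ (any play visits $\reachset$ at most once) and no $s_i$ is in the attractor since $\pz$ can play $s_i \transition s_{i+1} \transition \cdots$ forever. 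Your fixpoint argument then also fails: the subgame on $V = \{s_i : i \in \N\}$ has $\reachset \cap V = \emptyset$, so the $\pz$-side construction yields nothing and no contradiction arises. The missing ingredient is to \emph{recompute} the reachability value in the shrinking subgame at every stage: the paper removes at stage $\alpha$ those states whose $\reachp{\reachset \cap \states_\alpha}$-value \emph{in the current subgame $\game_\alpha$} is below $1$, together with their backward closure under $\po$- and random transitions. Removing states creates new states of value below $1$ at the next stage (in the example, once the $t_i$ are gone the $s_i$ have value $0$), making the iteration genuinely transfinite; on the removed states $\hat\ostrat$ plays optimally minimizing for $\reachp$ in the relevant subgame (Lemma~\ref{reachp-opt-min}).
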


For the proof we need the following lemmas, which are variants of Lemmas \ref{lem:optmin} and~\ref{lem:reach-opt-uniform} for the objective $\reachp{\reachset}$, which is defined as:
\[
\reachp{\reachset} := \{\state_0\state_1\cdots \in S^{\omega}\mid \exists\, i \ge 1.\,
\state_i \in \reachset\}
\]
The difference to~$\reach{\reachset}$ is that $\reachp{\reachset}$ requires a path to~$\reachset$ that involves at least one transition.

\begin{lemma} \label{reachp-opt-min}
Let $\game$ be a finitely branching game with objective $\reachp{\reachset}$.
Then there is an MD strategy $\ostrat \in \ostratset$ that is optimal minimizing in every state.
\end{lemma}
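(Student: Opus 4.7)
My plan is to reduce to Lemma~\ref{lem:optmin} via a one-step look-ahead. The starting observation is the identity
\[
\valueof{\game}{s}^{\reachp{\reachset}} \;=\; \begin{cases}
\sup_{s' : s \transition s'} \valueof{\game}{s'}^{\reach{\reachset}} & \text{if } s \in \zstates, \\
\inf_{s' : s \transition s'} \valueof{\game}{s'}^{\reach{\reachset}} & \text{if } s \in \ostates, \\
\sum_{s'} \probp(s)(s')\, \valueof{\game}{s'}^{\reach{\reachset}} & \text{if } s \in \rstates,
\end{cases}
\]
which holds because $\reachp{\reachset}$ requires a visit at time $\ge 1$, so after the first transition the remaining event is exactly $\reach{\reachset}$. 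Since $\game$ is finitely branching, a minimiser among the successors is attained in the $\ostates$ case.

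To construct the desired MD strategy $\hat{\ostrat}$, I would first form the auxiliary game $\game^{\bullet}$ obtained from $\game$ by redirecting every outgoing transition of each $\reachset$-state to a self-loop. All $\reach{\reachset}$-values are preserved, because the event depends only on the play before the first visit to $\reachset$. Applying Lemma~\ref{lem:optmin} to $\game^{\bullet}$ supplies an MD strategy $\tilde{\ostrat}$ optimal minimising for $\reach{\reachset}$ in~$\game^{\bullet}$. I then set $\hat{\ostrat}(s) := \tilde{\ostrat}(s)$ for $s \in \ostates \setminus \reachset$, and, for $s \in \ostates \cap \reachset$, let $\hat{\ostrat}(s)$ be any successor of $s$ (in $\game$) of minimum $\reach{\reachset}$-value. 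By the one-step identity this ensures $\valueof{\game}{\hat{\ostrat}(s)}^{\reach{\reachset}} = \valueof{\game}{s}^{\reachp{\reachset}}$ for every $s \in \ostates$.

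To finish, I would first argue that $\hat{\ostrat}$ is optimal minimising for $\reach{\reachset}$ in $\game$ (plays under $\hat{\ostrat}$ in $\game$ agree with plays of $\tilde{\ostrat}$ in $\game^{\bullet}$ up to the first visit to $\reachset$, and $\reach{\reachset}$ is satisfied the moment $\reachset$ is entered), and then derive the lemma's inequality by conditioning on the first successor~$s_1$ of the initial state~$s_0$: applying the pointwise bound $\probm_{\game, s_1, \zstrat', \hat{\ostrat}}(\reach{\reachset}) \le \valueof{\game}{s_1}^{\reach{\reachset}}$ and taking the appropriate sup/inf/expectation over $s_1$ yields $\probm_{\game, s_0, \zstrat, \hat{\ostrat}}(\reachp{\reachset}) \le \valueof{\game}{s_0}^{\reachp{\reachset}}$ through the one-step identity.

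The main obstacle I anticipate is making precise the assertion that $\hat{\ostrat}$ is optimal minimising for $\reach{\reachset}$ in $\game$, given that it is patched together from $\tilde{\ostrat}$ off $\reachset$ and an unrelated min-selector on $\ostates \cap \reachset$. The saving observation is that once any $\reachset$-state is entered, the event $\reach{\reachset}$ is satisfied with probability $1$ regardless of subsequent play, so the behaviour of $\hat{\ostrat}$ on $\reachset$ is entirely invisible to the probability of $\reach{\reachset}$ from any starting state.
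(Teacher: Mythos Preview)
Your proposal is correct and follows essentially the same approach as the paper: both use Lemma~\ref{lem:optmin} to handle states outside~$\reachset$ (where $\reach{\reachset}$ and $\reachp{\reachset}$ coincide) and then, on $\ostates \cap \reachset$, pick a value-minimising successor using finite branching. The paper's version is more direct---it applies Lemma~\ref{lem:optmin} to~$\game$ itself rather than to an auxiliary game~$\game^{\bullet}$, and simply observes that any $s \in \reachset \cap \ostates$ with $\reachp{\reachset}$-value below~$1$ must have a successor $s' \notin \reachset$ of equal $\reachp{\reachset}$-value---so your detour through~$\game^{\bullet}$ and the explicit one-step identity is unnecessary but harmless.
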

\begin{proof}
Outside~$\reachset$, the objectives $\reach{\reachset}$ and $\reachp{\reachset}$ coincide, so outside~$\reachset$, the MD strategy~$\ostrat$ from Lemma~\ref{lem:optmin} is optimal minimizing for $\reachp{\reachset}$.
Any $s \in \reachset \cap \ostates$ with $\valueof{\game}{s} < 1$ must have a transition $s \transition s'$ with $s' \notin \reachset$ and $\valueof{\game}{s} = \valueof{\game}{s'}$, where the value is always meant with respect to~$\reachp{\reachset}$.
Set $\ostrat(s) := s'$.
Then $\ostrat$ is optimal minimizing in every state, as desired.
\end{proof}
\begin{lemma} \label{reachp-opt-max}
Let $\game$ be a finitely branching game with objective $\reachp{\reachset}$.
Suppose player~$\pz$ does not have value-decreasing transitions.
Then there is an MD strategy $\zstrat \in \zstratset$ that is optimal maximizing in every state.
\end{lemma}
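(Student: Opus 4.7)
The plan is to mimic the proof of Lemma~\ref{lem:reach-opt-uniform} almost verbatim, after first establishing a finite-horizon approximation for $\reachp{\reachset}$. Specifically, I would prove that for every $s\in\states$ and every $\epsilon>0$ there exist $\zstrat\in\zstratset$ and $n\in\N$ with $\probm_{\game,s,\zstrat,\ostrat}(\reachp{\reachset}\cap\reachn{n}{\reachset})>\valueof{\game}{s}-\epsilon$ for every $\ostrat$. For $s\notin\reachset$ this is literally Lemma~\ref{lem:reach-approx}, since $\reach$ and $\reachp$ agree on plays starting outside $\reachset$. For $s\in\reachset$ one does a one-step decomposition: $\valueof{\game}{s}$ equals a max/min/expectation (depending on the owner of $s$) of $v^{\reach}$-values of the finitely many successors of $s$; applying Lemma~\ref{lem:reach-approx} at each successor and then taking the maximum of the resulting horizons yields the required bound for $s$ itself.

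With this tool in hand, I would construct $\hat\zstrat$ as the limit of a sequence $\zstrat_i'$ of compatible MD strategies, following the construction of Lemma~\ref{lem:reach-opt-uniform}. Fix an enumeration $s_1,s_2,\ldots$ of $\states$ in which every state appears infinitely often and set $\game_0:=\game$. Given $\game_i$, use the finite-horizon approximation to get $\zstrat_i$ and $n_i$ so that $\probm_{\game_i,s_i,\zstrat_i,\ostrat}(\reachp_{\le n_i}{\reachset})>\valueof{\game_i}{s_i}-2^{-i}$. Restrict to the finite subgame $\game_i'$ on the $n_i$-neighbourhood of $s_i$, extract a genuine MD strategy $\zstrat_i'$ on it achieving the same bound, and obtain $\game_{i+1}$ from $\game_i$ by hard-wiring $\zstrat_i'$ at every player-$\pz$ state of positive value in that neighbourhood. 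The value-preservation argument $\valueof{\game_{i+1}}{s}=\valueof{\game_i}{s}$ is identical to the one in Lemma~\ref{lem:reach-opt-uniform}: because player~$\pz$ has no value-decreasing transitions (now with respect to $\reachp{\reachset}$), every fixed MD move is at least value-preserving, and the reset construction based on the uniform lower bound $\lambda>0$ for reaching $\reachset$ from ${\it Env}_i'$ within $n_i'$ steps absorbs the error via $(1-\lambda)^\infty=0$. Taking the pointwise limit of $\zstrat_1',\zstrat_2',\ldots$ gives an MD strategy $\hat\zstrat$ satisfying $\probm_{\game,s_i,\hat\zstrat,\ostrat}(\reachp{\reachset})>\valueof{\game}{s_i}-2^{-i}$ for all $i$ and all $\ostrat$; since every state is enumerated infinitely often, $\hat\zstrat$ is optimal everywhere.

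The main obstacle is the finite-horizon approximation for $\reachp{\reachset}$ at states $s\in\reachset$: one has to stitch together finite-horizon near-optimal strategies at each of the (finitely many) successors of $s$ into one $\zstrat$ that works for \emph{all} opponent reactions at $s$ (this is trivial when $s\in\zstates\cup\rstates$, but requires combining one strategy per successor when $s\in\ostates$). Once that is checked, the rest is a routine adaptation: I would emphasise that the hypothesis ``no value-decreasing transitions for player~$\pz$'' is to be read with respect to $\reachp{\reachset}$ (so that the fixing step in $\game_{i+1}$ remains value-preserving), and that finite branching is used both in the finite-horizon envelope construction and in Lemma~\ref{lem:reach-approx} itself.
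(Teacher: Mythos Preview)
Your proposal is correct, but it redoes far more work than necessary. The paper's proof is a three-line reduction to Lemma~\ref{lem:reach-opt-uniform}: for states $s\notin\reachset$ the objectives $\reach{\reachset}$ and $\reachp{\reachset}$ coincide, so the MD strategy $\hat\zstrat$ supplied by Lemma~\ref{lem:reach-opt-uniform} (applied to the game with $\reachset$ made absorbing, where the ``no value-decreasing transitions'' hypothesis carries over) is already optimal for $\reachp{\reachset}$ from every $s\notin\reachset$. It then remains only to fix $\hat\zstrat$ at states $s\in\reachset\cap\zstates$, and for this any successor $s'$ with $s'\in\reachset$ or $\valueof{\game}{s}=\valueof{\game}{s'}$ (which exists by finite branching and the no-value-decreasing hypothesis) will do. Your approach---re-establishing a $\reachp$-analogue of Lemma~\ref{lem:reach-approx} and then rerunning the entire envelope-and-fix construction of Lemma~\ref{lem:reach-opt-uniform}---is sound and self-contained, but the paper avoids all of it by observing that the hard work has already been done for $\reach{\reachset}$ and that the difference between $\reach$ and $\reachp$ is confined to a single step at states of~$\reachset$.
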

\begin{proof}
Outside~$\reachset$, the objectives $\reach{\reachset}$ and $\reachp{\reachset}$ coincide, so outside~$\reachset$, the MD strategy~$\zstrat$ from Lemma~\ref{lem:reach-opt-uniform} is optimal maximizing for $\reachp{\reachset}$.
Any $s \in \reachset \cap \zstates$ must have a transition $s \transition s'$ with $s' \in \reachset$ or $\valueof{\game}{s} = \valueof{\game}{s'}$, where the value is always meant with respect to~$\reachp{\reachset}$.
Set $\zstrat(s) := s'$.
Then $\zstrat$ is optimal maximizing in every state, as desired.
\end{proof}

With this at hand, we prove Theorem~\ref{thm:Buchi}.
\begin{proof}[Proof of Theorem~\ref{thm:Buchi}]
We proceed similarly to the proof of Theorem~\ref{thm:amost-sure-strong-det}.
In the present proof, whenever we write $\valueof{\game'}{s}$ for a subgame~$\game'$ of~$\game$, we mean the value of state~$s$ with respect to $\reachp{\reachset \cap \states'}$, where $\states' \subseteq \states$ is the state space of~$\game'$.

In order to characterize the winning sets of the players with respect to the objective $\buchi{\reachset}$, we construct a transfinite sequence of subgames $\game_\alpha$ of~$\game$, 
where $\alpha \in \ord$ is an ordinal number, by stepwise removing
certain states, along with their incoming transitions.
Let $\states_\alpha$ denote the state space of the subgame~$\game_\alpha$.
We start with $\game_0 := \game$.
Given~$\game_\alpha$, define~$D_\alpha^0$ as the set of states $s \in \states_\alpha$ with $\valueof{\game_\alpha}{s} < 1$, and for any $i \ge 0$ define $D_\alpha^{i+1}$ as the set of states $s \in \big(\states_\alpha \setminus \bigcup_{j=0}^i D_\alpha^j\big) \cap (\ostates \cup \rstates)$ that have a transition $s \transition s'$ with $s' \in D_\alpha^i$.
The set $\bigcup_{i\in \N} D_\alpha^i$ can be seen as the backward closure of~$D_\alpha^0$ under random transitions and transitions controlled by player~$\po$.
For any $\alpha \in \ord\setminus\{0\}$ we define 
$\states_\alpha := \states \setminus \bigcup_{\gamma < \alpha} \bigcup_{i \in \N} D_\gamma^i$.

Since the number of states never increases and $\states$ is countable, it follows that this sequence of games $\game_\alpha$ converges at some ordinal~$\beta$ where $\beta \le \omega_1$ (the first uncountable ordinal).
That is, we have $\game_\beta = \game_{\beta +1}$.

As in the proof of Theorem~\ref{thm:amost-sure-strong-det}, some games
$\game_\alpha$ may contain dead ends, which are always considered to be losing
for player~$\pz$. However, $\game_\beta$ does not contain dead ends.
(If $\states_\beta$ is empty then player~$\pz$ loses.)
We define the \emph{index}, $I(\state)$, of a state $\state$ as the ordinal~$\alpha$ with $s \in \bigcup_{i \in \N} D_\alpha^i$, and as $\undef$ if such an ordinal does not exist.
For all states~$s \in \states$ we have:
\[
I(s) = \undef \ \Leftrightarrow \ s \in \states_\beta \ \Leftrightarrow \ \valueof{\game_\beta}{s} = 1
\]
In particular, player~$\pz$ does not have value-decreasing transitions in~$\game_\beta$.
We show that states~$s$ with $I(s) \in \ord$ are in ${\owinset{\buchi{\reachset}}{<1}}_{\!\!\game}$, and states~$s$ with $I(s) = \undef$ are in ${\zwinset{\buchi{\reachset}}{=1}}_{\!\!\game}$, and in each case we give the claimed witnessing MD strategy.

\smallskip{\noindent\bf Strategy~$\hat\ostrat$:}
We define the claimed MD strategy~$\hat\ostrat$ for all $s \in \ostates$ with $I(s) = \alpha \in \ord$ as follows.
For all $s \in D_\alpha^0$, define~$\hat\ostrat(s)$ as in the MD strategy from Lemma~\ref{reachp-opt-min} for $\game_\alpha$ and~$\reachp{\reachset \cap \states_\alpha}$.
For all $s \in D_\alpha^{i+1} \cap \ostates$ for some $i \in \N$, define $\hat\ostrat(s) := s'$ such that $s \transition s'$ and $s' \in D_\alpha^i$.

In each~$\game_\alpha$, strategy~$\hat\ostrat$ coincides with the strategy from Lemma~\ref{reachp-opt-min}, except possibly in states~$s \in \states_\alpha$ with $\valueof{\game_\alpha}{s} = 1$.
It follows that $\hat\ostrat$ is optimal minimizing for all $\game_\alpha$ with $\alpha \in \ord$.

We show by transfinite induction on the index that
$\probm_{\game,s,\zstrat,\hat\ostrat}(\buchi{\reachset}) < 1$ holds for all
states $s \in \states$ with $I(s) \in \ord$ and for all player~$\pz$ strategies~$\zstrat$.
For the induction hypothesis, let $\alpha$ be an ordinal for which this holds for all states~$s$ with $I(s) < \alpha$.
For the inductive step, let $s \in \states$ be a state with $I(s) = \alpha$, and let $\zstrat$ be an arbitrary player~$\pz$ strategy in~$\game$.

\begin{itemize}
\item
Let $s \in D_\alpha^0$.
Suppose that the play from~$s$ under the strategies $\zstrat,\hat{\ostrat}$ always remains in~$\states_\alpha$, i.e., the probability of ever leaving~$\states_\alpha$ under $\zstrat,\hat{\ostrat}$ is zero.
Then any play in~$\game$ under these strategies coincides with a play in~$\game_\alpha$, so we have $\probm_{\game,s,\zstrat,\hat\ostrat}(\reachp{\reachset}) = \probm_{\game_\alpha,s,\zstrat,\hat\ostrat}(\reachp{\reachset \cap \states_\alpha})$.
Since $\hat\ostrat$ is optimal minimizing in~$\game_\alpha$, we have $\probm_{\game_\alpha,s,\zstrat,\hat\ostrat}(\reachp{\reachset \cap \states_\alpha}) \le \valueof{\game_\alpha}{s} < 1$.
Since $\buchi{\reachset} \subseteq \reachp{\reachset}$, we have $\probm_{\game,s,\zstrat,\hat\ostrat}(\buchi{\reachset}) \le \probm_{\game,s,\zstrat,\hat\ostrat}(\reachp{\reachset})$.
By combining the mentioned equalities and inequalities we get 
$\probm_{\game,s,\zstrat,\hat\ostrat}(\buchi{\reachset}) < 1$, as desired.

Now suppose otherwise, i.e., the play from~$s$ under $\zstrat,\hat{\ostrat}$, 
with positive probability, enters a state $s' \notin \states_\alpha$, hence $I(s') < \alpha$.
By the induction hypothesis we have $\probm_{\game,s',\zstrat',\hat\ostrat}(\buchi{\reachset}) < 1$ for any~$\zstrat'$.
Since the probability of entering~$s'$ is positive, we conclude $\probm_{\game,s,\zstrat,\hat\ostrat}(\buchi{\reachset}) < 1$, as desired.
\item
Let $s \in D_\alpha^{i}$ for some $i \ge 1$.
It follows from the definitions of~$D_\alpha^{i}$ and of~$\hat\ostrat$ that $\hat\ostrat$ induces a partial play of length $i+1$ from $s$ to a state $s' \in D_\alpha^0$ (player~$\pz$ does not play on this partial play).
We have shown above that $\probm_{\game,s',\zstrat,\hat\ostrat}(\buchi{\reachset}) < 1$.
It follows that $\probm_{\game,s,\zstrat,\hat\ostrat}(\buchi{\reachset}) < 1$, as desired.
\end{itemize}
We conclude that we have $\probm_{\game,s,\zstrat,\hat\ostrat}(\buchi{\reachset}) < 1$ for all $\zstrat$ and all $s \in \states$ with $I(s) \in \ord$.

\smallskip{\noindent\bf Strategy~$\hat\zstrat$:}
We define the claimed MD strategy~$\hat\zstrat$ for all $s \in \zstates$ with $I(s) = \undef$ to be the MD strategy from Lemma~\ref{reachp-opt-max} for~$\game_\beta$ and~$\reachp{\reachset \cap \states_\beta}$.
This definition ensures that player~$\pz$ never takes a transition in~$\game$ that leaves~$\states_\beta$.
Random transitions and player~$\po$ transitions in~$\game$ never leave~$\states_\beta$ either:
indeed, if $s' \in \states$ with $I(s') = \alpha \in \ord$ then $s' \in D_\alpha^{i}$ for some~$i$, hence if $s \in \ostates \cup \rstates$ and $s \transition s'$ then $I(s) \le \alpha$.
We conclude that starting from~$\states_\beta$ all plays in~$\game$ remain in~$\states_\beta$, under $\hat\zstrat$ and all player~$\po$ strategies.

Let $s \in \states_\beta$, hence $\valueof{\game_\beta}{s} = 1$.
Let $\ostrat$ be any player~$\po$ strategy.
Since $\hat\zstrat$ is optimal maximizing in~$\game_\beta$, we have $\probm_{\game_\beta,s,\hat\zstrat,\ostrat}(\reachp{\reachset \cap \states_\beta}) = 1$.
As argued above, $\states_\beta$ is not left even in~$\game$, hence $\probm_{\game,s,\hat\zstrat,\ostrat}(\reachp{\reachset \cap \states_\beta}) = 1$.

Therefore $\probm_{\game,s,\hat\zstrat,\ostrat}(\reachp{\reachset \cap \states_\beta}) = 1$ holds for all $s \in \states_\beta$ and all~$\ostrat$.
Since B\"uchi is repeated reachability, we also have $\probm_{\game,s,\hat\zstrat,\ostrat}(\buchi{\reachset}) = 1$ for all $\ostrat$ and all $s \in \states$ with $I(s) = \undef$.
\end{proof}

\section{Conclusions and Open Problems}\label{sec:conclusion}
With the results of this paper at hand, let us review the landscape of strong determinacy for stochastic games.
We have shown that almost-sure objectives are strongly determined (Theorem~\ref{thm:amost-sure-strong-det}), even in the infinitely branching case.

Let us review the finitely branching case. % in the following.
Quantitative reachability games are strongly determined \cite{Krcal:Thesis:2009,BBKO:IC2011,Brozek:TCS2013}.
They are generally not strongly FR-determined~\cite{Kucherabook11}, but they are strongly MD-determined under any of the conditions provided by Theorem~\ref{thm:reachability}.
Almost-sure reachability games and even almost-sure B\"uchi games are strongly MD-determined (Theorems \ref{thm:reachability} and~\ref{thm:Buchi}).
Almost-sure co-B\"uchi games are generally not strongly
FR-determined~\cite{Krcal:Thesis:2009}, even if player~$\pz$ is passive,
because player~$\po$ may need infinite memory to win.
However, the following question is open: if a state is almost-surely winning for
player~$\pz$ in a co-B\"uchi game, 
does player~$\pz$ also have a winning MD strategy?

%The latter also holds for infinitely-branching almost-sure reachability games~\cite{Kucherabook11}.
%It is not hard to see that in a reachability game, infinitely-branching random states can be replaced by a countably long chain of \emph{finitely} branching random states.
%The same holds for player~$\po$ states.
The same question is open for infinitely branching almost-sure reachability games (these games are generally not strongly FR-determined either~\cite{Kucherabook11}).
In fact, one can show that a positive answer to the former question implies a positive answer to the latter question.

{\smallskip\bf\noindent Acknowledgements.}
This work was partially supported by the EPSRC
through grants EP/M027287/1, EP/M027651/1, EP/P020909/1 and EP/M003795/1
and by St. John's College, Oxford.

% \bibliographystyle{abbrv}
% \bibliography{base}

\end{document}